\newcommand{\COMMENTED}[1]{{}}
\newcommand{\junk}[1]{\COMMENTED{#1}}
\begin{document}

\title{A fast parallel algorithm for minimum-cost
 small integral flows}
\author{
Andrzej Lingas \inst{1}
\and
Mia Persson \inst{2} 
\institute{
Department of Computer Science, Lund University, 22100
Lund, Sweden. \texttt{Andrzej.Lingas@cs.lth.se. Fax +46 46 13 10 21}
\and
Department of Computer Science, Malm\"o University, 205 06 Malm\"o, Sweden. \texttt{mia.persson@mah.se}
}}
 \date{} 
\maketitle
\begin{abstract}
We present a new approach to the minimum-cost 
integral flow problem for small values of the flow.
It reduces the problem to the tests of simple multi-variable
polynomials over a finite field of characteristic two
for non-identity with zero. In effect, we show that
a minimum-cost flow of value $k$ in a network
with $n$ vertices, a sink and a source, 
integral edge capacities and positive integral
edge costs polynomially bounded in $n$
can be found by a randomized PRAM, with errors
of exponentially small probability in $n,$ running 
in $O(k\log (kn)+\log^2 (kn))$ time and using
$2^{k}(kn)^{O(1)}$ processors. Thus, in particular,
for the minimum-cost flow of value $O(\log n),$ we obtain 
an $RNC^2$ algorithm.
\junk{We consider a parametrization of
the minimum-cost flow problem
with respect to the flow value. The minimum-cost flow problem
is a generalization
of the maximum flow problem which is known to be $P$-complete. 
We show that the minimum cost of
a flow of value $k$ in a network
with $n$ vertices, a sink and a source, 
integral edge capacities and positive integral
edge costs polynomially bounded in $n$
can be found by a randomized PRAM , with
one-sided errors of exponentially small in $n$
probability, running in $k\log^{O(1)}(kn)$ time
and using $2^{k}k^{O(1)}n^{O(1)}$ processors.
Thus, we establish the membership of so
parametrized minimum-cost integral network problem
in the randomized version of the class of {\em fixed-parameter 
parallelizable problems} (FPP). 
We also show that
the problem of finding a minimum-cost flow of value $k$ in a network
with a sink and a source, integral edge capacities
and positive polynomially bounded integral
edge costs 
admits a randomized FPP algorithm. 
Finally, we 
conclude that the problem of finding a
minimum-cost integral flow
of value (at most) logarithmic in the network size admits
an RNC algorithm.}
\end{abstract}


\section{Introduction}
The maximum network flow problem is a well known
fundamental problem in algorithms and optimization
with plenty of important applications \cite{AMO93,E79,EK72,L76}. It is known
to be $P$-complete even in its integral version 
provided that the edge capacities are
exponentially large in the size
of the network \cite{GSS82}.
The minimum-cost flow problem is a well known important
generalization of the maximum flow problem \cite{AMO93,EK72,F61,L76}. 
The objective
is to compute a maximum flow of minimum cost in a directed
graph where each edge is assigned a cost. For a flow $f$
in a directed graph $(V,E),$ the cost of $f$ is simply
$\sum_{e\in E}f(e)cost(e).$

The prospects for designing a fast and
processor efficient parallel algorithm, in particular
an NC algorithm \cite{R93}, for maximum integral flow 
or minimum-cost integral flow are
small. The fastest known parallel implementations
of general maximum flow and/or  minimum-cost flow algorithms achieve solely
a moderate speed up and still run in 
$\Omega(n^{\alpha})$ time, where $\alpha$ is a positive
constant, see \cite{AS92,G93}.

The situation changes when the edge capacities or the supply of
flow as well as edge costs are substantially bounded. For example, if the edge
capacities and edge costs are bounded by a polynomials in $n,$ both
problems admit RNC algorithms. Then, the maximum integer flow
problem admits even an $RNC^2$ algorithm \cite{MVV87,OS92,V93} while 
the minimum-cost integer flow problem admits an $RNC^3$
algorithm \cite{OS92}. At the heart of the aforementioned
RNC solutions is the randomized method of detecting a perfect
matching by randomly testing Edmonds' multi-variable polynomials
for non-identity with zero \cite{E67,KUW86,MVV87,V93}.

When the flow supply is relatively small,
e.g.,  logarithmic in the size
of the network or a poly-logarithmic one,
then just an NC implementation of the basic phase
in the standard Ford-Fulkerson method \cite{AMO93,E79,EK72,FF56,L76} yields
an NC algorithm ($NC^3$ when the supply is logarithmic)
for maximum integer flow that can be extended to
an NC algorithm for minimum-cost integer flow (when
edge costs are polynomially bounded). 
The number of processors used corresponds to that
required by a shortest path computation.

In this paper,
we present a new approach to the minimum-cost 
integral flow problem for a small value $k$ of the flow.
We directly associate a simple polynomial over a finite field with
the corresponding problem 
of the existence of $k$ mutually vertex disjoint paths
of bounded total length,
connecting two sets of $k$ terminals in a directed graph.
By using the idea of monomial cancellation, the latter
problem reduces to testing the polynomial over a finite field
of characteristic two for non-identity
with zero. 
We combine the  DeMillo-Lipton-Schwartz-Zippel
lemma \cite{DL78,S80} on probabilistic
verification of polynomial identities
with parallel dynamic programming
to perform the test efficiently in parallel.
Additionally, we use the isolation lemma 
to construct the minimum-cost flow \cite{MVV87,V93}.
\junk{It reduces the problem to the tests of simple multi-variable
polynomials over a finite field of characteristic two
for non-identity with zero.} 

In effect, we infer that
a minimum-cost flow of value $k$ in a network
with $n$ vertices, a sink and a source, 
integral edge capacities and positive integral
edge costs polynomially bounded in $n$
can be found by a randomized PRAM, with errors
of exponentially small probability in $n$, 
running 
in $O(k\log (kn)+\log^2 (kn))$ time and using
$2^{k}(kn)^{O(1)}$ processors. Thus, in particular,
for the minimum-cost flow of value $O(\log n),$ we obtain 
an $RNC^2$ algorithm.

\junk{
For these reasons, it is natural to look
at parametrized versions of these problems
and if possible to design efficient parallel
solutions to the slices of the problems
corresponding to different fixed values of
the parameter. In case of the integral
flow problems, the integral value of the flow 
is a natural parameter.

For the maximum integral flow problem,
we can easily obtain a fast parallel algorithm
for the $k$-th slice, i.e., for finding
a flow of value $k$ just by parallelizing
the finding of an augmenting path in the
residual graph in the Ford-Fulkerson algorithm
and iterating it up to $k$ times \cite{}.
Unfortunately, the case of 
the parametrized minimum-cost integral flow problem
seems much harder.}
\par
\vskip 3pt
\noindent
{\bf Related work.}
For the RNC algorithms for the related problem
of minimum-cost perfect matching see \cite{KUW86,MVV87,V93}.
For the comparison of time and substantial processor complexities
of prior RNC algorithms for the minimum-cost flow see 
page 7 in \cite{OS92}.
The fastest of the reported algorithms is not
an $RNC^2$ one even when the flow supply and thus the edge capacities
are logarithmic in the size of the network.
The idea of associating a polynomial over
a finite field to the sought structure has been already
used by Edmonds to detect matching \cite{E67} and then in several
papers presenting
RNC algorithms for perfect matching construction \cite{KUW86,MVV87,V93}. 
It appears in several recent papers that also exploit the idea
of monomial cancellation \cite{B10,BHT12,K08,W09}.

\junk{One could utilize the maximum flow minimum cut theorem \cite{E79}
to solve the $k$-th slice problem, i.e., to answer the question
if there is a flow of value $k$ from
the source $s$ to the sink $t,$ by
verifying if the network has an $s-t$ edge cut
of capacity not exceeding $k-1$ . In order 
to perform the verification fast in parallel
one could test all subsets of edges of total capacity
not exceeding $k-1$ for being an $s-t$ cut.
Note however that the total number of such subsets,
and consequently the work performed by such an
algorithm, could be $n^{\Omega(k)}.$ 

We would like to solve the $k$-th slice problem
of the parametrized minimum-cost integral flow problem
in $f(k)\log ^{O(1)} n$ time
using $g(k)n^{O(1)}$ processors, where $n$
is the number of vertices in the network and
$f,\ g$ are natural functions. 

In fact, Cesati and Di Ianni \cite{CD97} extended the framework 
of parametrized complexity \cite{DF96} by introducing
two classes of efficiently parallelizable parametrized
problems, PNC and FPP, respectively\footnote
{They refer to a private communication 
from Bodlaender, Downey and Fellows from 1994
as the origin of the concept of the class PNC.}.
PNC (parametrized analog of NC) contains all parametrized
problems which have a parallel deterministic algorithm
(e.g., PRAM \cite{R93})
running in $f(k)(\log |x|)^{h(k)}$ time and using $g(k)|x|^{\beta}$
processors, where $<x,k>$ is the instance of the problem.
$k$ is the parameter, $f,g$ and $h$ are arbitrary functions,
and $\beta $ is a constant. The
class of {\em fixed-parameter parallelizable problems}, for short
FPP, is a subclass of PNC, where
the function $h$ is required to be a constant.

One can naturally consider randomized counterparts of
these two classes (in analogy to the
random NC class, RNC \cite{R93}) by allowing the parallel algorithm to
use random bits and return an answer with a one-side bounded error.
\par
\vskip 3pt
\noindent
{\bf Our results.}
In terms of
the aforementioned terminology, 
we show that the parametrized minimum-cost integral flow
problem is in the randomized version of the class FPP. 
Namely, we show that the minimum cost of
a flow of value $k$ in a network
with $n$ vertices, a sink and a source, 
integral edge capacities and positive integral
edge costs polynomially bounded in $n$
can be found by a randomized PRAM, with
errors of exponentially small probability in $n,$
running in $k\log^{O(1)}(kn)$ time
and using $2^{k}k^{O(1)}n^{O(1)}$ processors.
We also show that
the problem of finding a minimum-cost flow of value $k$ in a network
with a sink and a source, integral edge capacities
and positive polynomially bounded integral
edge costs (if possible)
admits a randomized FPP algorithm. 
Finally, we 
conclude that the problem of finding a
minimum-cost integral flow
of value (at most) logarithmic in the network size admits
an RNC algorithm.
\par
\vskip 3pt
\noindent
{\bf Techniques.}
We associate a polynomial over a finite field with
the corresponding problem 
of the existence of $k$ mutually vertex disjoint paths
of bounded total length,
connecting two sets of $k$ terminals in a directed graph.
By using the idea of monomial cancellation, the latter
problem reduces to testing the polynomial for non-identity
with zero. We combine the  DeMillo-Lipton-Schwartz-Zippel
lemma \cite{DL78,S80} on probabilistic
verification of polynomial identities
with parallel dynamic programming
to perform the test efficiently in parallel.
Note that the idea of associating a polynomial over
a finite field to the sought structure has been already
used by Edmonds to detect matching \cite{E67}, and it appears
in several recent papers that also exploit the idea
of monomial cancellation \cite{B10,BHT12,K08,W09}.}
\par
\vskip 3pt
\noindent
{\bf Organization.} In the next section, we comment
briefly on the basic notation and the model of parallel computation
used in the paper. In Section 3, we derive our fast
randomized parallel method for detecting the existence of
$k$ mutually vertex disjoint paths of bounded total length
connecting two sets of $k$ terminals in a directed graph.
In Section 4, we generalize the method
to include edge costs which enables us to replace
the total length bound with the total cost one.
In section 5, we show a straightforward reduction
of the minimum-cost integer flow problem parametrized
by the flow value to the
corresponding disjoint paths problem which enables
us to derive our main result on detecting minimum-cost
small flows in parallel.

\section{Terminology}

For a natural number $n,$ we let $[n]$ denote
the set of natural numbers in the interval $[1,n].$
The cardinality of a set $A$ will be denoted by $|A|.$

We assume the standard definitions of {\em flow}
and {\em flow value}
in a network (directed graph) with 
integral edge capacities, a distinguished
source vertex $s$ and a distinguished sink vertex $t$
(e.g., see \cite{E79}) .

For the definitions of parallel random access machines (PRAM),
the classes NC and RNC and the corresponding notions
of NC and RNC algorithms, the reader is referred to \cite{R93}.

The {\em characteristic} of a ring or a field is the
minimum number of $1$ in a sum that yields $0.$
A finite field with $q$ elements is often denoted by $F_q.$

\junk{
By an FPP algorithm for a problem parametrized
with respect to $k,$ we mean 
a PRAM running in $f(k)\log^{O(1)}n$ time
and using $g(k)n^{O(1)}$ processors, where
$f,g$ are natural functions.}

\section{Connecting vertex-disjoint paths}

It is well known that the maximum integral network flow
problem with bounded edge capacities corresponds
to a disjoint path problem
(cf. \cite{E79}). In Section 5, we provide  an efficient
parallel reduction of the minimum-cost integral flow
problem parametrized by the flow value
to a  parametrized disjoint path problem. 
This section is devoted to a derivation of a fast randomized
parallel method for the decision version of the parametrized
path problem.  
\par
\vskip 3pt
Let $L=(V,E)$ be a network
in a form of a directed graph with $n$ vertices,
among them a distinguished
set $X=\{x_1,...,x_k\}$ of $k$ source vertices and a disjoint distinguished
set $Y=\{y_1,...,y_k\}$ of $k$ sink vertices.

A {\em walk} in $L$ is a sequence of vertices
$v_1,v_2,...,v_l$ of $L$
such that for $j=1,...,l-1,$ 
$( v_j,v_{j+1})\in E,$
$v_1$ is in $X,$ $v_2,...,v_{l-1}$ are in $V\setminus (X\cup Y),$
$v_l$ is in $Y.$ The length of the walk is $l-1.$
In other words, a walk is just a (not necessarily simple)
path starting from a vertex in $X$, having intermediate vertices
in $V\setminus (X\cup Y),$ and ending at a vertex
in $Y.$

A {\em proper set $S$ of walks} in $L$
is a set $W$ of $k$ walks 
of total length $\le k(n-1)$, each with a distinct start vertex in $X$
and a distinct end vertex in $Y.$
 
A {\em signature} of a proper set $S$ 
of walks is the pair $(i,j)$ that
is smallest in lexicographic
order such that the two walks that start at
$x_i$ and $x_j$ respectively intersect, and the first
intersection vertex of these two walks is the first
intersection vertex of the walk starting from $x_i$
with any walk in $S.$

\junk{
simple path $v_0,...,v_k,...,v_l$ such that 
$v_0,...,v_k$ and the reversal of $v_k,...,v_l$ are
prefixes of two different walks in $S.$}

Note that walks in $S$ are pairwise vertex disjoint iff
the signature  of $S$ is not defined.

We define the transformation $\phi$ on $S$ as follows.
If $S$ has the signature  $(i,j)$
then $\phi$ switches the suffix of the 
walk starting at $x_i$ with that of the walk
starting at $x_j$ at the first intersection vertex
of these two walks. See Fig. 1.
\junk{first pair of walks
in $S$ such that one of them
has prefix  $v_0,...,v_k$ and the other has as a prefix
the reverse of $v_k,...,v_l$, respectively.}
Otherwise, if the signature of $S$ is not
defined then $\phi$ is an identity on $S.$

\vspace{2mm}
\begin{figure}
 \label{fig:paths}
\begin{center}
\includegraphics[height=5cm]{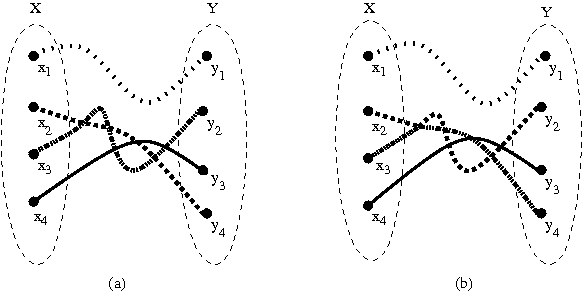}
\end{center}
\caption{An example of a proper set $S$ of walks and
the companion proper set $\phi(S)$ of walks. }
\end{figure}
\vspace{2mm}

Observe that if the signature of $S$
is defined then $\phi(S)$ has the same signature
as $S$ and $\phi(S)\neq S.$ The first observation
is immediate. To show the second one it is sufficient
to note that $\phi(S)=S$ holds iff $\phi$  transforms the
two walks which yield the signature of $S$ onto themselves.
The latter is however impossible since they have different
start vertices and different end vertices. Note also
that the walks in $\phi(S)$ have the same total length as 
those in $S.$

It follows that $\phi$ is an involution on sets
of proper walks of total length $l,$ i.e.,
$\phi(\phi(S))=S$ holds for any  proper set $S$ of walks of 
total length $l$.


For the network $L$  and $l\in [k(n-1)],$
let $F_{L,l}$ be the family of all proper 
sets of $k$ walks of total length $\le l$ in $L.$
Assign a distinct variable $x_e$ to each edge
$e$ in $L.$ 
For a walk $W\in F_{L,l},$ let
$M_W$ be the monomial, where
$x_e$ has multiplicity equal
to the number of occurrences
of $e$ in $W.$
Next, let $Q_{L,l}$ denote the polynomial
$\sum_{S\in F_{L,l}}\prod_{W\in S}M_W.$

\begin{lemma} \label{lem: zero}
For the network $L$ and $l\in [k(n-1)],$ 
there is a proper set of 
$k$ mutually vertex-disjoint walks of total length $\le l$
in $L$ iff $Q_{L,l}$ is not identical
to zero over a field of characteristic two.
\end{lemma}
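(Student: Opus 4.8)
The plan is to run a monomial-cancellation argument over a field of characteristic two, using the involution $\phi$ as the pairing that witnesses the vanishing of every non-disjoint configuration. Working over characteristic two, write $Q_{L,l}=\sum_{S\in F_{L,l}}M_S$ with $M_S=\prod_{W\in S}M_W$, and reorganize this sum according to the orbits of $\phi$. Since $\phi$ preserves total length, it maps $F_{L,l}$ into itself, and by the facts already established it is an involution whose fixed points are exactly the sets $S$ with undefined signature, i.e. precisely the proper sets of pairwise vertex-disjoint walks. Thus the terms split into two-element orbits $\{S,\phi(S)\}$ with $\phi(S)\neq S$ and singleton orbits given by the vertex-disjoint sets.

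First I would verify the crucial algebraic identity $M_S=M_{\phi(S)}$ whenever the signature of $S$ is defined; this is where the edge-variable encoding does the work. The map $\phi$ only exchanges the suffixes of the two walks $W_i,W_j$ beyond their first common vertex $v$, so if $W_i=A_iB_i$ and $W_j=A_jB_j$ are the splits at $v$, the new walks are $A_iB_j$ and $A_jB_i$. The multiset of edges traversed by this pair equals the union of the edges of $A_i,B_i,A_j,B_j$ both before and after the swap, and the other $k-2$ walks are untouched; hence the full products of edge monomials agree. Consequently each two-element orbit contributes $M_S+M_{\phi(S)}=2M_S=0$ over characteristic two, giving $Q_{L,l}\equiv\sum_{\phi(S)=S}M_S$.

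For the direction ``$Q_{L,l}\not\equiv 0\Rightarrow$ disjoint walks exist'', if $Q_{L,l}$ is not identically zero then after the reduction some monomial has coefficient $1$, which forces at least one fixed point $S$; by the characterization of fixed points, $S$ is a proper set of $k$ pairwise vertex-disjoint walks of total length $\le l$, as required. For the converse I would argue directly that a witness yields a surviving monomial rather than appeal to the reduced sum. Given $k$ mutually vertex-disjoint walks of total length $\le l$, shortcutting any self-intersections produces $k$ vertex-disjoint simple paths of no greater total length; let $m$ be their (multilinear) monomial. I claim the coefficient of $m$ in $Q_{L,l}$ is exactly one: any proper set with monomial $m$ must traverse precisely the edge multiset of $m$, which forms $k$ vertex-disjoint simple paths from $X$ to $Y$, and following the forced out-edge from each start vertex recovers these paths uniquely. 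So this configuration is the only preimage of $m$, the monomial survives, and $Q_{L,l}\not\equiv 0$.

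The step I expect to be the main obstacle is this forward-direction uniqueness claim together with a clean reading of the fixed-point characterization: one must confirm that a vertex-disjoint witness may always be taken to be a set of simple paths, so that $m$ is multilinear and uniquely decodable, and that no other proper set reproduces $m$. The remaining ingredients---the involution property, preservation of total length, and the identity $M_S=M_{\phi(S)}$---are either supplied by the preceding observations or follow from the edge-multiset bookkeeping sketched above.
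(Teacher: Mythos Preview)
Your proposal is correct and follows essentially the same approach as the paper: partition $F_{L,l}$ by the involution $\phi$, cancel the two-element orbits over characteristic two via $M_S=M_{\phi(S)}$, and for the forward direction pass to a simple-path witness whose multilinear monomial is uniquely realised. The only cosmetic difference is that for the forward implication you compute the coefficient of the simple-path monomial $m$ directly in the full polynomial $Q_{L,l}$ (by the forced out-edge argument), whereas the paper argues within the reduced sum $\sum_{S\in F^1_{L,l}}M_S$ that simple-path configurations have distinct monomials; your version is slightly more explicit but amounts to the same uniqueness observation.
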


\begin{proof}
$F_{L,l}$ can be partitioned into the family $F^1_{L,l}$
of sets $S$ of walks such that $\phi(S)=S$ and 
the family $F^2_{L,l}$
of sets $S$ of walks such that $\phi(S)\neq S.$
The polynomial $\sum_{S\in F^2_{L,l}}\prod_{W\in S}M_W$ is identical to zero
over a field of characteristic two
since for each $S\in F^2_L$ the monomials
$\prod_{W\in S}M_W$ and $\prod_{W\in \phi(S)}M_W$
contain equal multiplicities of the same variables and
$\phi(\phi(S))=S$
so $S$ and $\phi(S)$ can be paired.
On the other hand, since each set $S$ of walks in $F^1_{L,l}$
consist of mutually vertex-disjoint walks, the monomials $\prod_{W\in S}M_W$
in the polynomial $\sum_{S\in F^1_{L,l}}\prod_{W\in S}M_W$ are 
in one-to one correspondence with $S$ and thus
are unique provided that the walks in $S$ are simple
paths. Now, it is sufficient to observe that mutually
vertex-disjoint  walks can be always trivially pruned
to corresponding mutually vertex-disjoint simple paths.

\qed \end{proof}

\junk{The next lemma provides an equivalent form of $Q_L$
that will be useful in its evaluation.

\begin{lemma} \label{lem: form}
For $l\in [n-1],$
$x\in X$ and $z\in Y,$ 
let $W_l(x,z)$ be the set of walks 
of length $l$ in $L$
that start at $x$ and end at $z.$
The polynomial $Q_L$ is equivalent to
$\sum_{\pi}\prod_{i=1}^k\sum_{l\in [n-1]}\prod_{W\in W_l(x_i,y_{\pi(i)}) }M_W.$
\end{lemma}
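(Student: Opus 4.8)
\noindent
The plan is to obtain the factored form by reorganizing the sum defining $Q_L$ combinatorially and then applying distributivity. First I would describe the structure of a proper set $S$: it consists of $k$ walks whose start vertices are pairwise distinct elements of $X=\{x_1,\dots,x_k\}$ and whose end vertices are pairwise distinct elements of $Y=\{y_1,\dots,y_k\}$. Since $|X|=|Y|=k$, the starts exhaust $X$ and the ends exhaust $Y$, so $S$ induces a unique bijection $\pi\colon[k]\to[k]$ in which the walk of $S$ that starts at $x_i$ ends at $y_{\pi(i)}$. Conversely, $\pi$ together with a choice, for each $i$, of one walk $W_i$ from $x_i$ to $y_{\pi(i)}$ recovers $S$. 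This sets up a bijection between proper sets carrying the permutation $\pi$ and $k$-tuples $(W_1,\dots,W_k)$ of such walks, and under it the monomial splits multiplicatively as $\prod_{W\in S}M_W=\prod_{i=1}^{k}M_{W_i}$.

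Next I would split the sum defining $Q_L$ according to the induced bijection and factor. Grouping proper sets by $\pi$ gives $Q_L=\sum_{\pi}\sum_{(W_1,\dots,W_k)}\prod_{i=1}^{k}M_{W_i}$, where for fixed $\pi$ the inner sum ranges over all tuples with $W_i$ running over walks from $x_i$ to $y_{\pi(i)}$. The crucial point is that in $Q_L$ the walks carry \emph{no} mutual-disjointness requirement---disjointness is only a property of the monomials that survive cancellation in Lemma~\ref{lem: zero}, not a restriction on the index set---so the $k$ choices are independent. Distributivity therefore turns the inner sum of products into a product of sums, $\sum_{(W_1,\dots,W_k)}\prod_i M_{W_i}=\prod_{i=1}^{k}\bigl(\sum_{W}M_W\bigr)$, the $i$-th factor aggregating all walk-monomials from $x_i$ to $y_{\pi(i)}$. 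Stratifying each factor by the length $l$ of the walk, i.e.\ partitioning the walks from $x_i$ to $y_{\pi(i)}$ into the sets $W_l(x_i,y_{\pi(i)})$, rewrites the $i$-th factor as $\sum_{l\in[n-1]}\sum_{W\in W_l(x_i,y_{\pi(i)})}M_W$ and yields exactly the asserted expression.

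The step I expect to be the main obstacle is reconciling the two length bounds. The polynomial $Q_L$ sums over proper sets constrained by \emph{total} length $\le k(n-1)$, whereas the factored form imposes the \emph{per-walk} bound $l\in[n-1]$; as stated these do not describe literally the same index set, since a proper set might balance one long walk against shorter ones. I would resolve this by observing that the equivalence the lemma needs is precisely the one relevant to its intended use in evaluation: by the proof of Lemma~\ref{lem: zero}, over a field of characteristic two every contribution of a non-vertex-disjoint system cancels, so the only monomials that can survive arise from systems of $k$ mutually vertex-disjoint \emph{simple} paths, and such paths have length at most $n-1$ and thus already obey the per-walk bound. Consequently the per-walk-bounded factored form and $Q_L$ agree on their surviving parts and share the same zero/non-zero status, which is all that Lemma~\ref{lem: zero} and the subsequent evaluation require. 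I would close by noting that this per-length stratification is exactly what makes the form useful: each factor $\sum_{W\in W_l(x_i,y_{\pi(i)})}M_W$ is computable by an $l$-fold iteration of an edge-weighted adjacency operator, so the entire expression can be evaluated by parallel dynamic programming.
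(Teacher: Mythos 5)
There is a genuine gap, and it sits exactly at the step you yourself flagged as the main obstacle. Your grouping of proper sets by the induced permutation $\pi$ and the factorization by distributivity are fine as far as they go. But your resolution of the length-bound mismatch assumes that the cancellation argument of Lemma~\ref{lem: zero} applies to the per-walk-bounded factored polynomial, and it does not. The involution $\phi$ that drives that proof swaps walk \emph{suffixes} at the first intersection vertex; it preserves the \emph{total} length of the system (the paper notes this explicitly), but it redistributes length between the two walks involved. Consequently the family of systems in which every walk individually has length at most $n-1$ is not closed under $\phi$: a system inside the support of your factored form can be mapped to one containing a walk of length greater than $n-1$, which lies outside it. The pairing of intersecting systems therefore breaks down for the factored polynomial, and your assertion that it and $Q_L$ ``agree on their surviving parts'' is precisely the unproven claim: over characteristic two, an intersecting system whose $\phi$-partner is excluded by the per-walk truncation may contribute a surviving monomial to the factored form even when no disjoint system exists, so the two polynomials are not shown to have the same zero/non-zero status. (Your argument in the other direction --- that the multilinear monomial of a disjoint simple-path system survives in both, since each such path has length at most $n-1$ --- is sound.)

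You should also know that the paper contains no proof of this statement: in the source, the lemma occurs only inside a commented-out block (the \texttt{junk} macro, which expands to nothing) and is abandoned in the final development, very plausibly for exactly this reason. The working machinery parametrizes everything by the \emph{total} length bound --- the polynomials $Q_{L,l}$ over the families $F_{L,l}$, which \emph{are} closed under $\phi$ --- and evaluates them by the recurrences of Lemmas~\ref{lem: eval} and~\ref{lem: pareval} (dynamic programming over subsets $B\subseteq Y$ and total length budgets), rather than by factoring over permutations with per-walk bounds. To salvage your route you would need either a different monomial-preserving involution that respects per-walk length bounds, or a direct proof of the surviving-parts claim; neither appears in your proposal. (A minor point you handled correctly: the $\prod_{W\in W_l(x_i,y_{\pi(i)})}M_W$ in the statement must be read as a sum over $W$.)
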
}

To warm up, we prove the following lemma on sequential
evaluation of $Q_L.$

\begin{lemma}\label{lem: eval}
$Q_{L,l}$ can be evaluated for a given assignment
of values over a field $F_{2^{O(\log n)}}$ of characteristic two
in $O(k^3n^4+2^{k}k^4n^{3})$ time.
\end{lemma}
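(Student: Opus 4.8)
The plan is to view the evaluation as a two-stage computation: first compute, for every ordered pair of terminals and every admissible walk length, the field value obtained by summing the (evaluated) monomials of all walks of that length between them; then assemble these values into $Q_{L,l}$ by summing over the bijective pairings of sources with sinks while respecting the total-length budget. The starting observation is that, since every walk in a proper set has a distinct start vertex in $X$, a proper set is uniquely encoded by a permutation $\pi\in S_k$ together with a choice, for each $i$, of a walk from $x_i$ to $y_{\pi(i)}$, subject to the lengths summing to at most $l$. Hence, writing $P(i,j,\ell)$ for the value of $\sum_W M_W$ taken over all walks $W$ of length exactly $\ell$ from $x_i$ to $y_j$ under the given assignment, we have
\[
Q_{L,l}=\sum_{\pi\in S_k}\ \sum_{\substack{\ell_1,\dots,\ell_k\ge 1\\ \ell_1+\dots+\ell_k\le l}}\ \prod_{i=1}^k P(i,\pi(i),\ell_i).
\]

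First I would compute all the quantities $P(i,j,\ell)$. Fix a source $x_i$ and run a dynamic program over length: let $D_i[v,\ell]$ be the evaluated sum of monomials of all length-$\ell$ walks from $x_i$ to an intermediate vertex $v\in V\setminus(X\cup Y)$, obtained from the recurrence $D_i[v,\ell]=\sum_{(u,v)\in E} D_i[u,\ell-1]\cdot \mathrm{val}(x_{(u,v)})$, with the base case handling the first edge out of $x_i$. Each value $P(i,j,\ell)$ is then read off by one more relaxation of the edges entering $y_j$ (including the length-$1$ case of a direct edge $x_i\to y_j$). Because the total length of a proper set is at most $l\le k(n-1)$, no individual walk length exceeds $l=O(kn)$, so the length index ranges over $O(kn)$ values; performing one edge-relaxation pass ($O(n^2)$ work) per length level and per source accounts for the first term $O(k^3 n^4)$ of the claimed bound, with generous slack.

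The second, and conceptually harder, stage is to evaluate the double sum above. A naive expansion ranges over the $k!$ permutations, which is far too expensive; the key device is to replace the sum over permutations by a subset dynamic program, reducing the $k!$ factor to $2^k$, and to carry the length budget along as a convolution. Concretely, process the sources $1,2,\dots,k$ in order and let $T[S,m]$, for a set $S\subseteq[k]$ of already-used sinks with $|S|=i$ and a budget $m\in\{0,\dots,l\}$, be the sum over all injective assignments of sources $1,\dots,i$ onto $S$ of the corresponding products of $P$-values with total length exactly $m$. The transition
\[
T[S,m]=\sum_{j\in S}\ \sum_{\ell=1}^{m} T[S\setminus\{j\},\,m-\ell]\cdot P\bigl(|S|,j,\ell\bigr),
\]
with $T[\emptyset,0]=1$, fills the table, and $Q_{L,l}=\sum_{m\le l}T([k],m)$. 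There are $2^k$ choices of $S$ and $O(l)=O(kn)$ choices of $m$, and each entry costs a length-$O(l)$ convolution over the at most $k$ candidate sinks; this bounds the stage by $2^k(kn)^{O(1)}$, yielding the second term $O(2^k k^4 n^3)$.

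I expect the subset-DP step to be the main obstacle, since it is where the permanent-like sum over pairings is tamed: the correctness argument must check that processing sources in a fixed order and indexing states by the used-sink set counts each proper set exactly once (which holds precisely because start vertices are distinct), and that the length convolution correctly enforces the global budget $\le l$. The remaining points are routine: all arithmetic takes place in $F_{2^{O(\log n)}}$, where a single field operation costs $\mathrm{polylog}(n)$ time and is absorbed into the stated polynomial factors, and the cancellation of non-vertex-disjoint configurations is not something the evaluation must arrange — by Lemma~\ref{lem: zero} it is automatic in characteristic two, so the evaluation need only return the single field element $Q_{L,l}$.
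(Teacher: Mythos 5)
Your proposal is correct and follows essentially the same route as the paper: a length-indexed dynamic program for the pairwise source--sink walk sums (the paper's $Q_q(x,z)$, your $P(i,j,\ell)$), followed by a subset-over-sinks dynamic program with a length convolution that pairs the $|S|$-th source with each sink in $S$ (the paper's $Q_p(B)$ recurrence, your $T[S,m]$), and the stated time bounds match. The only cosmetic difference is that you first write the permutation-sum formula explicitly before collapsing it to the $2^k$ subset DP, whereas the paper defines the subset polynomials directly.
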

\begin{proof}
\junk{
We shall use the equivalent form
of $Q_L$ given in Lemma \ref{lem: form}.
We denote the value of a polynomial
$Q$ under the assignment $f$ by $f(Q).$ 

For $A\subset X$ and $B\subset Y,$ where
$|B|=|A|-1, l\in [(n-1)|B|],$ and $z\in V\setminus (X\cup Y),$
let  $W_l(A,B,z)$ be the family of all sets 
$S$ consisting of $|B|$ walks 
connecting the $|B|$ distinct sources in $A$
with the $|B|$ distinct sources in $B$ and a
not necessarily simple path starting
at the remaining source in $A$ and going
through some vertices in $V\setminus (X\cup Y)$,
and ending at $z$ so that the total length
of the walks and path is exactly $l.$
With each such a set $S$ we associate a 
monomial $M_S$ which is the product
of the monomials $M_W$ over the walks it
consists of and the multi-set
of the variables associated
with the edges on the path to $z$.
Let $P_l(A,B,z)$ be the polynomial
$\sum_{S\in W_l(A,B,z)}M_S.$}
For $B\subset Y,$ $l\in [(n-1)|B|],$ we consider
the family $W_l(B)$ of all sets 
$S$ consisting of $|B|$ walks 
connecting $|B|$ distinct sources in $\{x_1,...,x_{|B|}\}$
with the $|B|$ distinct sinks in $B$ so that the total length
of the walks is exactly $l.$ Next, we
define the polynomial $Q_p(B)$
as $\sum_{S\in W_l(B)}\prod_{W\in S}M_W.$ 
Note that $Q_{L,l}=\sum_{p=k}^lQ_p(Y)$
and $l\le nk.$

On the other hand, for $p\in [k(n-1)],$
$x\in X$ and $z\in V\setminus X,$ we consider
the set $W_p(x,z)$ of walks 
of length $p$ in $L$
that start at $x$ and end at $z.$
Let $Q_p(x,z)$ be the polynomial
$\sum_{W\in W_p(x,z)}M_W.$

We have the following recurrence for a nonempty subset $B$
of $Y$ 
and $p\in [|B|(n-1)]:$

$$Q_p(B)=\sum_{y\in B}\sum_{q\in
  [|B|(n-1)-|B|+1]}Q_{p-q}(B\setminus \{ y\})Q_q(x_{|B|},y).$$

Next, we have also the following recurrence for $x\in X,$
$z\in V\setminus X,$ and $q\in [k(n-1)]:$

$$Q_q(x,z)= \sum_{u\in V\setminus (X\cup Y)\& (u,z)\in E}Q_{q-1}(x,u)x_{(u,z)}.$$ 

We have also $Q_1(x,z)=x_{(x,z)}$
if $(x,z)\in E,$ and otherwise $Q_1(x,z)=0.$
Consequently, we can evaluate
all the polynomials 
$Q_q(x,z)$ by the second recurrence in $O(k^2n^3)$ time.

Now, by using the first recurrence
and setting $Q_p(\emptyset)$ to $1$
in the field, we can evaluate all 
the polynomials
$Q_p(B)$ in the increasing order of the cardinalities
of $B$ in $O(2^{k} k^3n^2)$ time.

By $Q_{L,l}=\sum_{p=k}^lQ_p(Y)$ and
$l\le nk,$ we conclude that
$Q_{L,l}$ can be evaluated in $O(k^3n^4+2^{k}k^4n^{3})$ time.

\junk{Note that $Q_L=\sum_{l=k}^{k(n-1)} P_l(A,B).$
Let $\max (B)$ be the sink in $B$ with the largest
index. We assume $P_0(...)=0$ and 
if $(x,z)\in E$ then $P_1(\{x\},\emptyset)=x_{x,z}.$
Now, we can evaluate $P_l(A,B,z)$ and $P_l(A,B)$
by the following recurrences

$$f(P_{l+1}(A,B,z))= \sum {u\in V\setminus (X\cup Y)
\& (u,z)\in E} f(P_l(A,B,u))f(x_{(u,z)})+ \sum {y\in A\& (y,z)\in E}
 f(P_l(A\setminus \{y\},B))f(x_{(y,z)})$$

$$P_{l+1}(A,B)= \sum {z\in V\setminus (X\cup Y)
\& (u,z)\in E} P_l(A, B \setminus \{ \max (B))x_{(z,\max(B))}$$

in $O(2^{2k}kn^2)$ time.}
\qed \end{proof}

We can partially parallelize
the sequential evaluation of $Q_{L,l}$ in order to obtain
the following lemma.

\begin{lemma}\label{lem: pareval}
$Q_{L,l}$ can be evaluated for a given assignment 
of values over a field $F_{2^{O(\log n)}}$ of characteristic two
 in $O(k\log n +\log^2 n)$ time 
by a CREW PRAM using $O(k^2n^5 + 2^{k}k^4n^{3})$ processors.
\end{lemma}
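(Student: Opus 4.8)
The plan is to parallelize the two-stage dynamic program underlying the sequential bound of Lemma~\ref{lem: eval}. Recall that the evaluation splits into (i) computing the fixed-source walk polynomials $Q_q(x,z)$ for all $x\in X$, $z\in V\setminus X$ and $q\in[k(n-1)]$ by the edge-extension recurrence, and (ii) assembling the subset polynomials $Q_p(B)$ over increasing cardinalities of $B$, followed by the final summation $Q_{L,l}=\sum_{p=k}^{l}Q_p(Y)$. These two parts are responsible, respectively, for the two summands $\log^2 n$ and $k\log n$ in the target running time, and I would treat them separately. Throughout I would use that $k\le n$ (there are $k$ distinguished sources among the $n$ vertices), so that $\log(kn)=O(\log n)$.

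For the walk polynomials, first observe that under a fixed assignment of field values the recurrence $Q_q(x,z)=\sum_{(u,z)\in E,\,u\in V\setminus(X\cup Y)}Q_{q-1}(x,u)\,x_{(u,z)}$ is linear in the previous layer. I would encode it as a single $n\times n$ matrix $M$ over $F_{2^{O(\log n)}}$ whose $(u,z)$ entry is the assigned value of $x_{(u,z)}$ when $(u,z)\in E$ and $u\in V\setminus(X\cup Y)$, and $0$ otherwise; the base case $Q_1(x,\cdot)$ gives a row vector $r^{(x)}$ of the values of the edges leaving $x$. Then $Q_q(x,\cdot)=r^{(x)}M^{q-1}$, and the vertex restriction built into $M$ correctly forces every intermediate vertex of each counted walk into $V\setminus(X\cup Y)$ while leaving the endpoints free, exactly as the definition of a walk requires; note that all not-necessarily-simple walks are counted, which is precisely what the cancellation argument of Lemma~\ref{lem: zero} relies on. Since $q$ ranges up to $k(n-1)$, I would compute the powers $M^{0},\dots,M^{k(n-1)}$ by a parallel prefix (scan) over matrix multiplication: $O(\log(kn))$ scan stages, each performing matrix products in $O(\log n)$ time with $n^{3}$ multipliers apiece, for $O(\log(kn)\log n)=O(\log^{2}n)$ time overall; finally the $kn$ vector-matrix products $r^{(x)}M^{q-1}$ are done simultaneously in $O(\log n)$ time. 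This stage fits within the stated processor budget.

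For the subset polynomials I would respect the only genuine data dependency: $Q_p(B)=\sum_{y\in B}\sum_{q}Q_{p-q}(B\setminus\{y\})\,Q_q(x_{|B|},y)$ refers to subsets of cardinality $|B|-1$, so the $2^{k}$ sets $B$ split into $k$ layers by cardinality that must be processed in sequence. Within a single layer all sets $B$ of that cardinality, and all admissible $p\le k(n-1)$, are mutually independent and can be handled in parallel: each $Q_p(B)$ is a sum of at most $|B|\cdot k(n-1)=O(k^{2}n)$ products of already-computed field elements, which I would evaluate by a balanced product-then-sum tree in $O(\log(k^{2}n))=O(\log n)$ time. Over the $k$ layers this costs $O(k\log n)$ time, and the concluding sum $\sum_{p=k}^{l}Q_p(Y)$ of at most $kn$ terms adds only $O(\log n)$. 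Adding the two stages yields the claimed $O(k\log n+\log^{2}n)$ time, and a term-by-term count of the products formed, together with the per-operation cost of arithmetic over $F_{2^{O(\log n)}}$, stays within $O(k^{2}n^{5}+2^{k}k^{4}n^{3})$ processors.

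The main obstacle, and the reason the $k$ factor in the time bound is unavoidable by this route, is the cardinality dependency of the subset recurrence: unlike the walk layer, whose linear recurrence I can collapse by repeated squaring, the subset layer genuinely proceeds one cardinality at a time, so the best I can do is fully parallelize each of the $k$ sequential levels. The remaining care is bookkeeping: verifying that reads are concurrent while writes stay exclusive so that the CREW model suffices, confirming that multiplication and addition in $F_{2^{O(\log n)}}$ run in $O(\log n)$ time so they do not inflate the per-stage cost, and checking that the ranges of $q$ in both recurrences, bounded by $k(n-1)$, keep every parallel sum of logarithmic depth.
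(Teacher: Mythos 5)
Your proposal is correct and takes essentially the same two-stage route as the paper: the paper's first stage computes the generalized walk polynomials $Q_q(x,z)$ for arbitrary start vertices $x\in V\setminus Y$ and all $q\in[k(n-1)]$ via the doubling recurrence $Q_q(x,z)=\sum_{y\in V\setminus(X\cup Y)}Q_{\lceil q/2\rceil}(x,y)\,Q_{\lfloor q/2\rfloor}(y,z)$ in $O(\log^2 n)$ time with $O(kn^4)$ processors, which is exactly your matrix-power/parallel-prefix computation written entrywise, and its second stage runs the same cardinality-layered recurrence for the $Q_p(B)$ in $k$ phases of $O(\log n)$ time each, finishing with the sum $\sum_{p=k}^{l}Q_p(Y)$. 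The only difference is the linear-algebraic packaging of the first stage, which is cosmetic.
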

\begin{proof}
We generalize the definition of the set $W_q(x,z)$
and the corresponding polynomial $Q_q(x,z)$
to include arbitrary start vertex $x\in V\setminus Y$,
requiring $z\in V\setminus X$ as previously.
Then, we can evaluate $Q_q(x,z)=\sum_{W\in W_l(x,z)} M_W$
for $x\in V\setminus Y$ and $z\in V\setminus X,$
for $q\in [k(n-1)]$ by the following standard
doubling recurrence for $q\ge 2:$

$$Q_q(x,z)=\sum_{y\in V\setminus (X\cup Y)}Q_{\lceil
  q/2\rceil}(x,y)Q_{\lfloor q/2\rfloor}(y,z)$$

At the bottom of the recursion, we have $Q_1(x,z)=x_{(x,z)}$
if $(x,z)\in E,$ otherwise $Q_1(x,z)=0.$
It follows that all $Q_q(x,z)$ 
for $x\in V\setminus Y$, $z\in V\setminus X,$
and $q\in[k(n-1)]$
can be evaluated
in a bottom-up manner in $O(\log^2 n)$ time 
by a CREW PRAM
using $O(kn^4)$ processors.

Recall the first recurrence from the proof of Lemma
\ref{lem: eval}.
When the polynomials $Q_q(x_i,y_j)$ for $x_i\in X$ and $y_i\in Y$
are evaluated, we can evaluate in turn 
the polynomials $Q_p(B)$,
where $B\subset Y,$ $p\in [|B|(n-1)]$
in $k$ phases in the increasing order of the cardinalities
of $B$  by this recurrence.
It can be done in $O(k(\log k +\log n))=O(k\log n) $ time
by a CREW PRAM using $2^{k}k^3n^2$ processors.

By $Q_{L,l}=\sum_{p=k}^lQ_p(Y),$
$l\le nk,$ and $k\le n,$ we conclude that
$Q_{L,l}$ can be evaluated in $O(k\log n +\log^2 n)$ time
by a CREW PRAM using $O(k^2n^5 + 2^{k}k^4n^{3})$
processors.

\junk{
Hence, the values $T(x_i,y_j)=\sum_{l\in \{ 1,...,n-1\}} T_l(x_i,y_j)$
for $i,j\in [k]$ can be also computed
in $O(\log^2 n)$ time 
using $O(n^3)$ processors.

Consequently,  by $f(Q_L)=\sum_{\pi}\prod_{i=1}^k T(x_i,y_{\pi(i)}),$ 
$f(Q_L)$ can be computed 
in $O(k\log k + \log^2 n)$ time 
by a CREW PRAM using $O(k!n)$ processors.}

\qed \end{proof} 
\junk{
\begin{lemma}\label{lem: eval}
$Q_L$ can be evaluated for a given assignment $f$
of values over a finite field of characteristic $2$
in $O(k!k^3n^3)$ time.
\end{lemma}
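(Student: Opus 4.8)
The plan is to reduce the evaluation of $Q_L$ under a given assignment $f$ to two stages: first building a $k\times k$ array of field values that packs together all walk contributions between sources and sinks, and then summing a permanent-style expression over all $k!$ source-to-sink pairings. Concretely, a proper set $S$ is obtained by fixing a bijection $\pi$ of $X$ onto $Y$ and then, independently for each $i$, choosing a single walk from $x_i$ to $y_{\pi(i)}$. Distributing the product $\prod_{W\in S}M_W$ over these independent choices yields the factorized identity
\[
Q_L=\sum_{\pi}\prod_{i=1}^{k}T(x_i,y_{\pi(i)}),\qquad T(x,z)=\sum_{l=1}^{n-1}\ \sum_{W\in W_l(x,z)}M_W,
\]
which holds as a polynomial identity over any commutative ring, in particular over our field of characteristic two. (The factorization is valid precisely because each walk is individually bounded in length by $n-1$, so the per-pair choices are independent.) Thus it suffices to evaluate each entry $f(T(x_i,y_j))$ and then combine them.

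First I would compute, under $f$, the values $f(Q_l(x,z))$, where $Q_l(x,z)=\sum_{W\in W_l(x,z)}M_W$ is the monomial-sum over walks of length exactly $l$ from a source $x\in X$ to a vertex $z$. These are evaluated by the layered recurrence $f(Q_l(x,z))=\sum_{(u,z)\in E,\ u\in V\setminus(X\cup Y)}f(Q_{l-1}(x,u))\,f(x_{(u,z)})$, with base case $f(Q_1(x,z))=f(x_{(x,z)})$ when $(x,z)\in E$ and $0$ otherwise. The crucial point is that this computes the sum over all (not necessarily simple) walks without enumerating them: each length layer is a single pass over the edges. For a fixed source $x$, sweeping $l$ from $1$ to $n-1$ and, for each $l$, every target $z$ and every in-edge $(u,z)$, costs $O(n^3)$ field operations; over the $k$ sources this is $O(kn^3)$. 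Summing over $l$ then produces every entry $f(T(x_i,y_j))=\sum_{l}f(Q_l(x_i,y_j))$ for $i,j\in[k]$ in a further $O(k^2 n)$ operations.

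It remains to evaluate $\sum_{\pi}\prod_{i=1}^{k}f(T(x_i,y_{\pi(i)}))$. I would enumerate the $k!$ bijections $\pi$ one at a time and, for each, accumulate the product of the $k$ precomputed entries $f(T(x_i,y_{\pi(i)}))$ at a cost of $O(k)$ field multiplications, hence $O(k!\,k)$ operations overall; no cancellation subtleties arise here, since we are merely plugging in values rather than manipulating the symbolic polynomial. (One could note that this sum is the permanent of the matrix $(f(T(x_i,y_j)))$ and compute it faster, but naive enumeration already suffices.) Adding the three contributions gives running time $O(kn^3+k^2n+k!\,k)$, which is bounded by the claimed $O(k!\,k^3n^3)$. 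The only genuine obstacle is justifying the factorized form of $Q_L$ in the first paragraph, namely that ranging over all pairings and summing a single walk per pair reproduces exactly the monomials of $Q_L$ with the correct multiplicities, together with the observation that the layered recurrence faithfully generates the walk monomial-sums $Q_l(x,z)$; the remaining analysis is routine bookkeeping of field operations.
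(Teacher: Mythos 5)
Your proposal is correct and takes essentially the same route as the paper's own proof: the paper likewise rewrites $Q_L$ in the permanent form $\sum_{\pi}\prod_{i=1}^{k}T(x_i,y_{\pi(i)})$ (via its auxiliary lemma on the equivalent form of $Q_L$), computes the walk sums $T_l(x,z)$ by the same layered recurrence $T_l(x,z)=\sum_{y\in V\setminus(X\cup Y)}T_{l-1}(x,y)\sum_{W\in W_1(y,z)}f(M_W)$, and then sums over all $k!$ bijections, comfortably within the stated $O(k!\,k^3n^3)$ bound. The only (harmless) difference is that you inline the justification of the factorization---valid because each walk is individually length-bounded by $n-1$, so the choices per source--sink pair are independent---whereas the paper delegates this to a separate lemma stated without proof.
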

\begin{proof}
We shall use the equivalent form
of $Q_L$ given in Lemma \ref{lem: form}.
We denote the value of a polynomial
$Q$ under the assignment $f$ by $f(Q).$  .

To begin with, we compute
the values $T_l(x,z)=\sum_{W\in W_l(x,z)} f(M_W)$
for $x\in X$ and $z\in V\setminus X,$
for $l=1,...,n-1.$

\junk{It is easy to see that these values can be computed inductively
with respect to $l$ in $O(kn^2l)$ time. }

It is easy to see that these values can be computed
recursively by
$$T_l(x,z)= \sum_{y\in V\setminus (X\cup Y)} T_{l-1}(x.y)\sum_{W\in W_1(y,z)} f(M_W)$$ 
in $O(ln^2)$ time. 

By Lemma \ref{lem: form}, we have
$f(Q_L)=\sum_{\pi}\prod_{i=1}^k T(x_i,y_{\pi(i)}).$

\qed \end{proof}

We can also evaluate $Q_L$ efficiently in parallel.

\begin{lemma}\label{lem: pareval}
$Q_L$ can evaluated for a given assignment $f$
of values over a finite field of characteristic $2$
 in $O(k\log k + \log^2 n)$ time 
by a CREW PRAM using $O(k!n^3)$ processors.
\end{lemma}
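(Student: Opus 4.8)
The plan is to parallelize the two nested sequential recurrences from the proof of Lemma~\ref{lem: eval} while keeping their two-stage structure: first evaluate all single-source walk polynomials $Q_q(x,z)$, then assemble them into the subset polynomials $Q_p(B)$, and finally sum these into $Q_{L,l}$.

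First I would compute $Q_q(x,z)$ for every length $q\in[k(n-1)]$. The step-by-step recurrence of Lemma~\ref{lem: eval} has depth $\Theta(q)$ and is useless in parallel, so instead I would generalize the start vertex to range over $V\setminus Y$ (keeping $z\in V\setminus X$) and use the path-doubling recurrence
$$Q_q(x,z)=\sum_{y\in V\setminus(X\cup Y)}Q_{\lceil q/2\rceil}(x,y)\,Q_{\lfloor q/2\rfloor}(y,z),$$
with base case $Q_1(x,z)=x_{(x,z)}$ when $(x,z)\in E$ and $0$ otherwise. Splitting each walk of length $q$ at the vertex reached after $\lceil q/2\rceil$ steps is a bijection onto pairs of shorter walks meeting at an interior vertex, and the monomial of the composite walk is exactly the product of the two halves' monomials (with repeated-edge multiplicities added), so the recurrence holds verbatim over the field. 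Each application is an $n\times n$ by $n\times n$ matrix product over $F_{2^{O(\log n)}}$, computable in $O(\log n)$ time with $O(n^3)$ processors; since the length parameter halves, the recursion has depth $O(\log(kn))=O(\log n)$, giving all $O(kn)$ tables in $O(\log^2 n)$ time and $O(kn^4)$ processors.

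Second I would feed the values $Q_q(x_i,y_j)$ into the first recurrence of Lemma~\ref{lem: eval},
$$Q_p(B)=\sum_{y\in B}\sum_{q\in[|B|(n-1)-|B|+1]}Q_{p-q}(B\setminus\{y\})\,Q_q(x_{|B|},y),$$
evaluated in $k$ phases by increasing $|B|$. Within a phase every value $Q_p(B)$ is a sum of $O(k)\cdot O(kn)=O(k^2n)$ products already available from earlier phases and from stage one, so it can be formed by a balanced reduction in $O(\log k+\log n)=O(\log n)$ time; running all $2^k$ subsets and all $O(kn)$ values of $p$ in parallel costs $O(2^k k^3 n^2)$ processors per phase, hence $O(k\log n)$ time over the $k$ phases (using $k\le n$). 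A final balanced sum $Q_{L,l}=\sum_{p=k}^{l}Q_p(Y)$ with $l\le nk$ adds only $O(\log n)$ time and $O(nk)$ processors. The three stages compose to the claimed $O(k\log n+\log^2 n)$ time, and the processor count is dominated by the two dynamic programs, comfortably inside $O(k^2n^5+2^k k^4 n^3)$ once the $O(\log n)$-bit cost of each $F_{2^{O(\log n)}}$ operation is charged.

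The part I expect to need the most care is the correctness of the doubling step: I must verify that the generalized polynomials $Q_q(x,z)$ with $x\in V\setminus Y$, together with the summation over interior vertices $y\in V\setminus(X\cup Y)$, reproduce exactly the walk-monomials used in $Q_{L,l}$, including the correct multiplicities when an edge is traversed several times. Because a walk is merely a (not necessarily simple) directed path whose interior avoids $X\cup Y$, concatenation at an interior vertex is length-preserving and multiplicity-preserving, so this bookkeeping goes through and the monomial ring homomorphism under a value assignment is respected. The remaining ingredients are the standard CREW-PRAM primitives of matrix multiplication and balanced summation, whose depth and work bounds are routine.
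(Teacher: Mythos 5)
Your two-stage dynamic program (doubling recurrence for the pairwise walk polynomials, then a subset recurrence over $B\subseteq Y$ carrying a total-length budget $p$) is internally sound, and it is in fact the scheme the paper uses elsewhere to evaluate the budgeted polynomial $Q_{L,l}$. But it does not deliver the bounds claimed in this statement. You arrive at $O(k\log n+\log^2 n)$ time and $O(k^2n^5+2^kk^4n^3)$ processors, while the lemma asserts $O(k\log k+\log^2 n)$ time and $O(k!\,n^3)$ processors. These are genuinely different: for $k=2^{\sqrt{\log n}}$ your time bound exceeds $k\log k+\log^2 n$ by a $\Theta(\sqrt{\log n})$ factor, and for fixed $k$ your $n^5$ term already exceeds $O(k!\,n^3)$. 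The overhead comes precisely from threading the global budget $p$ through $k$ sequential subset phases, each of which is an $O(\log n)$-depth reduction over $O(k^2n)$ products.

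The paper proves this particular lemma by a different route that exploits the absence of any constraint coupling the $k$ walks in $Q_L$: each walk independently has some length in $[n-1]$, so the polynomial factors as $Q_L=\sum_{\pi}\prod_{i=1}^{k}T(x_i,y_{\pi(i)})$, where $T(x_i,y_j)=\sum_{l\in[n-1]}\sum_{W\in W_l(x_i,y_j)}M_W$. One computes all $T_l(x,z)$ by exactly your doubling recurrence (this accounts for the $\log^2 n$ term and the $n^3$-type processor count), sums over $l$ to get the $k^2$ values $T(x_i,y_j)$, and then evaluates the permanent-like expression directly: each of the $k!$ products has depth $O(\log k)$, and the balanced sum of $k!$ terms has depth $O(\log k!)=O(k\log k)$, using $O(k!\,k)$ processors for this stage. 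That is where both the $k\log k$ term and the $k!$ factor in the statement originate. Your subset DP is the right tool once a total length or cost budget couples the walks (and it replaces $k!$ by $2^k$ in the processor count), but for the uncoupled $Q_L$ of this statement it is strictly slower; to recover the stated bounds you should drop the budget index and use the factored permutation sum instead.
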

\begin{proof}
We extend the definition of the set $W_l(x,z)$
to include arbitrary start vertex $x\in V\setminus Y$,
requiring $z\in V\setminus X$ as previously.
Then, we can evaluate the values $T_l(x,z)=\sum_{W\in W_l(x,z)} f(M_W)$
for $x\in V\setminus Y$ and $z\in V\setminus X,$
for $l=1,...,n-1$ by the following standard
doubling recurrence for $l\ge 2:$

$$T_l(x,z)=\sum_{y\in V\setminus (X\cup Y)}T_{\lceil
  l/2\rceil}(x,y)T_{\lfloor l/2\rfloor}(y,z)$$

It follows that all $T_l(x,z)$ 
for $x\in V\setminus Y$, $z\in V\setminus X,$
and $l=1,...,n-1$
can be computed
in a bottom up manner in $O(\log^2 n)$ time 
by a CREW PRAM
using $O(n^3)$ processors.

Hence, the values $T(x_i,y_j)=\sum_{l\in \{ 1,...,n-1\}} T_l(x_i,y_j)$
for $i,j\in [k]$ can be also computed
in $O(\log^2 n)$ time 
using $O(n^3)$ processors.

Consequently,  by $f(Q_L)=\sum_{\pi}\prod_{i=1}^k T(x_i,y_{\pi(i)}),$ 
$f(Q_L)$ can be computed 
in $O(k\log n + \log^2 n)$ time 
by a CREW PRAM using $O(k!n)$ processors.

\qed \end{proof} }

The following lemma on polynomial
identities verification has been shown independently by DeMillo and Lipton, 
Schwartz, and Zippel.

\begin{lemma}\label{lem: zip}\cite{DL78,S80}
Let $Q(x_1,x_2,...,x_m)$ be a nonzero polynomial
of degree $d$
over a field of size $r.$ Then, for $f_1,$ $f_2,$
...,$f_m$ chosen independently and uniformly
at random from the field, the probability that
$Q(f_1,f_2,...,f_m)$ is not equal to zero
is at least $1-\frac dr.$
\end{lemma}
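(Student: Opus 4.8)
The plan is to prove the lemma by induction on the number of variables $m$, reducing repeatedly to the one-variable case where a nonzero polynomial of degree $d$ over a field has at most $d$ roots.

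For the base case $m=1$, the polynomial $Q(x_1)$ is a nonzero univariate polynomial of degree $d$, hence it has at most $d$ roots in the field. Since $f_1$ is drawn uniformly from the $r$ field elements, the probability that $Q(f_1)=0$ is at most $d/r$, so $Q(f_1)\neq 0$ with probability at least $1-d/r$, as claimed.

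For the inductive step, assuming the statement for polynomials in $m-1$ variables, I would regard $Q$ as a polynomial in the single variable $x_m$ whose coefficients are polynomials in the remaining variables:
\[
Q(x_1,\ldots,x_m)=\sum_{i=0}^{t} x_m^{\,i}\, c_i(x_1,\ldots,x_{m-1}),
\]
where $t$ is the degree of $Q$ in $x_m$ and $c_t$ is not identically zero. The one piece of degree bookkeeping that matters is that $\deg c_t \le d-t$, which follows because the total degree of $Q$ is $d$ and $c_t$ is multiplied by $x_m^{\,t}$. I would then split the failure event according to whether the leading coefficient vanishes at the sampled point. Let $E_1$ be the event $c_t(f_1,\ldots,f_{m-1})=0$, and let $E_2$ be the event that $c_t(f_1,\ldots,f_{m-1})\neq 0$ while nonetheless $Q(f_1,\ldots,f_m)=0$. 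Applying the inductive hypothesis to the nonzero polynomial $c_t$ of degree at most $d-t$ gives $\Pr[E_1]\le (d-t)/r$. Conditioned on $c_t(f_1,\ldots,f_{m-1})\neq 0$, the restriction $Q(f_1,\ldots,f_{m-1},x_m)$ is a nonzero univariate polynomial of degree exactly $t$ in $x_m$, so it has at most $t$ roots, whence $\Pr[E_2]\le t/r$. A union bound then yields $\Pr[Q(f_1,\ldots,f_m)=0]\le (d-t)/r+t/r=d/r$, and taking the complement gives the stated bound $1-d/r$.

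The main subtlety — not a deep obstacle, but the step most in need of care — is the conditioning used for $E_2$: one must verify that once $f_1,\ldots,f_{m-1}$ are fixed with the leading coefficient nonzero, the induced polynomial in $x_m$ really has degree exactly $t$, so that the ``at most $t$ roots'' bound applies, and that the fresh randomness of $f_m$ is independent of the earlier choices so the two probability estimates combine by a clean union bound. With this handled, assembling the two estimates completes the induction.
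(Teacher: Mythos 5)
Your proof is correct: it is the standard induction-on-the-number-of-variables argument for the DeMillo--Lipton--Schwartz--Zippel lemma, with the degree bookkeeping ($\deg c_t \le d-t$) and the split into the events $E_1$ and $E_2$ handled properly. The paper itself gives no proof of this lemma --- it is quoted as a known result with citations to DeMillo--Lipton and Schwartz --- so there is nothing to compare against; your argument is the standard one found in those references and would serve as a complete self-contained justification.
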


Note that the polynomial $Q_{L,l}$ is of degree $l$
not larger than $k(n-1)\le n^2.$
We can use Lemma \ref{lem: zip} 
with a field $F_{2^{c\log n}}$ of characteristic two
to obtain a randomized test of the polynomial
$Q_{L,l}$ for not being identical to zero with
one side errors.
For sufficiently large constant $c,$
the one side errors are of probability 
not larger than a constant smaller than $1.$
By performing $O(n)$
such independent tests, the probability
of one side errors can be decreased
to exponentially
small in $n$ one. 

By Lemma \ref{lem: pareval}, the series of the tests can be performed 
in $O(k\log n + \log^2 n)$ time by a PRAM
using $O(k^2n^6 + 2^{k}k^4n^{4})$ processors.
\junk{in $n^{O(1)}$ time}
By Lemma \ref{lem: zero}, these tests
verify
if there is a proper set of mutually
vertex-disjoint walks 
of total length $\le l$ in the network $L$.
The latter in turn is equivalent to  
the existence of $k$ mutually vertex-disjoint paths
of total length $\le l$
connecting $X$ with $Y$
in $L$ by the definition
of a proper set of walks in $L.$
Hence, observing that each walk
can be trivially pruned to a simple directed
path with the same endpoints,
we obtain our main result.

\begin{theorem}\label{theo: pdec}
The problem of whether or not there
is a set of $k$ mutually vertex-disjoint 
simple directed paths of total length $\le l$
connecting $X$ with $Y$ in the network $L$
can be decided by a randomized CREW PRAM, with
one-sided errors of exponentially small probability in $n,$
running in $O(k\log n + \log^2 n)$
time and using $O(k^2n^6 + 2^{k}k^4n^{4})$ processors.
\end{theorem}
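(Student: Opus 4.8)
The plan is to assemble the algebraic characterization of Lemma~\ref{lem: zero}, the probabilistic identity test of Lemma~\ref{lem: zip}, and the parallel evaluation procedure of Lemma~\ref{lem: pareval} into a single randomized parallel decision algorithm. By Lemma~\ref{lem: zero}, deciding the existence of $k$ mutually vertex-disjoint walks of total length $\le l$ in $L$ is equivalent to deciding whether the polynomial $Q_{L,l}$ is not identically zero over a field of characteristic two. So the whole task reduces to a fast parallel randomized test of $Q_{L,l}$ for non-identity with zero, after which a short combinatorial translation back to simple paths finishes the argument.

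First I would fix the field. Since every proper set of walks has total length at most $k(n-1)\le n^2$, the polynomial $Q_{L,l}$ has degree at most $n^2$. I would instantiate the field as $F_{2^{c\log n}}$ for a sufficiently large constant $c$, so that it has characteristic two (as Lemma~\ref{lem: zero} requires), its size is $r=n^{c}$, and each element is representable in $O(\log n)$ bits so that every field operation stays cheap. By Lemma~\ref{lem: zip}, drawing the values of the edge variables independently and uniformly at random makes a nonzero $Q_{L,l}$ evaluate to zero with probability at most $n^2/n^{c}=n^{2-c}$, which for large enough $c$ is bounded by a constant $c_0<1$. A single evaluation at such a random point is carried out by Lemma~\ref{lem: pareval} in $O(k\log n+\log^2 n)$ time using $O(k^2n^5+2^{k}k^4n^3)$ processors.

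To drive the one-sided error down to exponentially small in $n$, I would run $O(n)$ of these evaluations, each with an independently drawn random assignment, and answer ``yes'' exactly when at least one evaluation is nonzero. If $Q_{L,l}$ is identically zero then every evaluation returns zero, so there are no false positives; if $Q_{L,l}$ is nonzero then the probability that all $O(n)$ independent tests vanish is at most $c_0^{\,O(n)}=2^{-\Omega(n)}$. Running the $O(n)$ tests concurrently leaves the time bound unchanged at $O(k\log n+\log^2 n)$ and multiplies the processor count by $O(n)$, giving $O(k^2n^6+2^{k}k^4n^4)$ processors; the final parallel OR over the $O(n)$ test outcomes costs only $O(\log n)$ time and $O(n)$ processors and is subsumed by these bounds.

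Finally I would translate the algebraic verdict back to the combinatorial one: a nonzero $Q_{L,l}$ certifies a proper set of $k$ mutually vertex-disjoint walks of total length $\le l$, and by the definition of a walk this is precisely a collection joining $X$ to $Y$; each such walk prunes trivially to a simple directed path with the same endpoints and no greater length, producing the desired $k$ vertex-disjoint simple paths. The conceptual difficulty has already been dispatched in the prior lemmas, so the only point requiring real care here is the interplay between the field size and the degree bound: the field must be large enough for the Schwartz--Zippel estimate to keep the per-test error below a fixed constant, yet have elements of only $O(\log n)$ bits so that the evaluation of Lemma~\ref{lem: pareval} retains its stated complexity. Given that balance, the probability amplification and the processor bookkeeping across the $O(n)$ parallel copies are routine.
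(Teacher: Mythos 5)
Your proposal is correct and follows essentially the same route as the paper: reduce via Lemma~\ref{lem: zero} to testing $Q_{L,l}$ for non-identity with zero over $F_{2^{c\log n}}$, apply Lemma~\ref{lem: zip} with the degree bound $l\le k(n-1)\le n^2$ to get a constant one-sided error per trial, evaluate with Lemma~\ref{lem: pareval}, amplify with $O(n)$ independent parallel trials (multiplying the processor count by $O(n)$), and prune the resulting walks to simple paths. No gaps; your accounting of the field size, error direction, and processor bounds matches the paper's.
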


\junk{
\begin{proof}
To begin with, 
let us observe that it is
sufficient to ensure the relaxed thesis
with one sided errors of probability
not exceeding a constant $c<1$ in order
to obtain the original thesis by 
iterating the algorithm linear number of times.

Next, to prove the relaxed thesis
observe that there is a set of $k$ vertex disjoint 
simple paths $P_i$
connecting $x_i$ with $y_i$ for
$i=1,...,k$ in the network $L$
iff there is a mutually vertex-disjoint proper set
of walks in $L.$ 

Now, it is sufficient to combine Lemmata \ref{lem: eval}
with Lemma \ref{lem: zip}  applied to the field
$F_{2^{C\log n}}$ of characteristic two
for sufficiently large constant $C.$
\qed \end{proof} 
}

\section{Vertex-disjoint connecting paths of bounded cost}

In this section, we shall consider a more general
situation where there are a positive integer $C$
and  a cost function $c$
assigning to each of the $m$ edges $e$ in the network $L$ a cost
$c(e)\in [C].$ The cost of a walk or a path
is simply the sum of the costs of the edges
forming it (the cost of an edge is counted
the number of times it appears on the walk or path).
We would like to detect
a proper set of $k$ walks in $L$ that achieves
the minimum cost. 

For this reason, we consider the following
generalization of the polynomial $Q_{L,l}.$
For $U\in [mC],$
let $H_{L,U}$ be the set of all proper sets
of walks in the edge-costed network $L$
that have total cost not greater than $U.$
Next, for a walk $W$ in $L,$ 
as previously, let $M_W$
be the monomial which is the product
of $x_e$ over the occurrences
of edges $e$ on $W.$
The polynomial $CQ_{L,U}$ is defined
by $\sum_{S\in H_{L,U}}\prod_{W\in S} M_W.$

By using the proof method of 
Lemma \ref{lem: zero}, we obtain the following counterpart of this
lemma for $CQ_{L,U}$.

\begin{lemma} \label{lem: czero}
For the edge-costed network $L$,
there is a proper set of 
$k$ mutually vertex-disjoint walks
of total cost $\le U$ in $L$ 
iff $CQ_{L,U}$ is not identical
to zero over a field of characteristic two.
\end{lemma}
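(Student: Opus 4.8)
The plan is to follow the proof of Lemma~\ref{lem: zero} almost verbatim, the only genuinely new point being that the involution $\phi$ introduced in Section~3 preserves total \emph{cost}, and not merely total length. First I would observe that $\phi$ preserves the whole multiset of edge-occurrences across a proper set $S$ of walks: since $\phi$ only exchanges the suffixes of the two walks yielding the signature of $S$ at their first common intersection vertex, each edge occurs exactly as many times in $\phi(S)$ as it does in $S$. Hence the monomial $\prod_{W\in S}M_W$ is invariant under $\phi$, and so is the total cost $\sum_e m_e(S)\,c(e)$, where $m_e(S)$ denotes the multiplicity of $e$ across the walks of $S$. In particular $\phi$ maps $H_{L,U}$ into itself, so (exactly as in Section~3) $\phi$ is an involution on $H_{L,U}$.

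Next I would partition $H_{L,U}$ into the family $H^1_{L,U}$ of sets $S$ with $\phi(S)=S$ and the family $H^2_{L,U}$ of sets $S$ with $\phi(S)\neq S$. As in Lemma~\ref{lem: zero}, the contribution $\sum_{S\in H^2_{L,U}}\prod_{W\in S}M_W$ vanishes over a field of characteristic two: each $S\in H^2_{L,U}$ pairs with the distinct element $\phi(S)$, which lies again in $H^2_{L,U}$ because $\phi$ preserves cost and $\phi(\phi(S))=S\neq\phi(S)$; the two produce equal monomials, so all such terms cancel in pairs.

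Finally I would analyse the surviving part $\sum_{S\in H^1_{L,U}}\prod_{W\in S}M_W$. Since $\phi(S)=S$ holds precisely when the signature of $S$ is undefined, that is, when the walks of $S$ are mutually vertex-disjoint, every such $S$ consists of mutually vertex-disjoint walks. For a vertex-disjoint family of \emph{simple} paths the associated monomial is multilinear and determines the family uniquely, so these monomials cannot cancel. This yields both directions at once. If some mutually vertex-disjoint family of simple paths of total cost $\le U$ exists, its unique multilinear monomial survives and $CQ_{L,U}\not\equiv 0$; conversely, if $CQ_{L,U}\not\equiv 0$, a surviving monomial must originate in $H^1_{L,U}$, hence from mutually vertex-disjoint walks of total cost $\le U$, which prune cycle-by-cycle to mutually vertex-disjoint simple paths with the same endpoints and total cost $\le U$.

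The only step demanding genuine care is the cost-invariance of $\phi$ established above, together with the remark that deleting cycles never increases total cost since the edge costs $c(e)\in[C]$ are positive; both are immediate once the preservation of the edge-occurrence multiset is noted, so no real obstacle remains beyond transcribing the argument of Lemma~\ref{lem: zero} with $l$ replaced by the cost bound $U$ and $F_{L,l}$ replaced by $H_{L,U}$.
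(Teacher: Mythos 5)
Your proposal is correct and follows exactly the route the paper intends: the paper gives no separate proof of this lemma, stating only that it follows ``by using the proof method of Lemma~\ref{lem: zero}'', and your argument is precisely that method, with the one genuinely new ingredient --- that the suffix-swapping involution $\phi$ preserves the multiset of edge occurrences and hence the total cost, so it restricts to an involution on $H_{L,U}$ --- made explicit. The cancellation over characteristic two, the analysis of the fixed points of $\phi$, and the pruning of vertex-disjoint walks to simple paths (with the remark that positive edge costs make pruning cost-nonincreasing) all match the paper's argument for $Q_{L,l}$ with $l$ replaced by $U$.
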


Next, we obtain
the following counterpart of Lemma \ref{lem: pareval} for $CQ_{L,U}$.

\begin{lemma}\label{lem: cpareval}
$CQ_{L,U}$ can be evaluated for a given assignment $f$
of values 
 over a field $F_{2^{O(\log n)}}$ of characteristic two
in $O(k\log (Cn) + \log^2 (Cn))$ time 
by a PRAM using 
$O(k^2C^5n^{10} + 2^{k}k^4C^3n^{6})$ processors.
\end{lemma}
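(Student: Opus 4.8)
The plan is to reproduce the computation of Lemma~\ref{lem: pareval} almost verbatim, with the scalar parameter ``length'' replaced everywhere by ``cost''. The only genuinely new point is that cost, unlike length, cannot be split evenly, so the doubling step must be reorganised into a convolution; this is where the extra factors of $C$ and $n$ in the processor bound, and the extra $\log C$ in the running time, come from. Correctness of the underlying identity between non-vanishing of $CQ_{L,U}$ and the existence of vertex-disjoint walks is already supplied by Lemma~\ref{lem: czero}, so I only have to describe how to evaluate $CQ_{L,U}$ fast in parallel.

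First I would introduce the cost-indexed analogues of the auxiliary polynomials $Q_q(x,z)$ and $Q_p(B)$ used in the proof of Lemma~\ref{lem: pareval}. For $x\in V\setminus Y$, $z\in V\setminus X$, a length $\ell\in[k(n-1)]$, and a cost $q\in[U]$, let $P_{\ell,q}(x,z)$ be the sum of the monomials $M_W$ over all walks $W$ from $x$ to $z$ having exactly $\ell$ edges and total cost exactly $q$, and set $CQ_q(x,z)=\sum_{\ell}P_{\ell,q}(x,z)$. Likewise, for $B\subseteq Y$ and a total cost $p\in[U]$, let $CQ_p(B)$ be the cost-$p$ restriction of the proper-set generating polynomial, so that, mirroring the final identity of Lemma~\ref{lem: pareval}, $CQ_{L,U}=\sum_{p=k}^{U}CQ_p(Y)$, where $U\le mC\le Cn^2$.

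Second, I would evaluate the single-walk polynomials, and here lies the crux. In Lemma~\ref{lem: pareval} a walk of length $q$ splits canonically into halves of lengths $\lceil q/2\rceil$ and $\lfloor q/2\rfloor$, but a walk of cost $q$ need not possess a prefix of cost exactly $\lceil q/2\rceil$, so one cannot double directly on the cost. I would therefore keep doubling on the number of edges and carry the cost as a second index, turning the scalar product in the recurrence of Lemma~\ref{lem: pareval} into a convolution over the cost,
$$P_{\ell,q}(x,z)=\sum_{y\in V\setminus(X\cup Y)}\sum_{a+b=q}P_{\lceil \ell/2\rceil,a}(x,y)\,P_{\lfloor \ell/2\rfloor,b}(y,z),$$
with base case $P_{1,q}(x,z)=x_{(x,z)}$ when $(x,z)\in E$ and $c((x,z))=q$, and $0$ otherwise. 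The recursion has depth $O(\log(kn))=O(\log(Cn))$, and each level performs, for every quadruple $(\ell,q,x,z)$, a sum over $y$ of an inner convolution of two cost-polynomials of degree at most $U$; this convolution contributes an additional $O(\log U)=O(\log(Cn))$ to the depth. Summing the outputs over $\ell\le k(n-1)$ yields all $CQ_q(x,z)$.

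Finally, I would run the cost version of the proper-set recurrence,
$$CQ_p(B)=\sum_{y\in B}\sum_{q}CQ_{p-q}(B\setminus\{y\})\,CQ_q(x_{|B|},y),$$
in $k$ phases in increasing order of $|B|$ exactly as in Lemma~\ref{lem: pareval}; each phase is again an inner convolution over the cost range $[U]$, costing $O(\log(Cn))$ time, for a total of $O(k\log(Cn))$. Combined with the doubling phase this gives the claimed $O(k\log(Cn)+\log^2(Cn))$ running time, and forming $CQ_{L,U}=\sum_{p=k}^{U}CQ_p(Y)$ closes the computation. The processor count $O(k^2C^5n^{10}+2^{k}k^4C^3n^{6})$ then follows by repeating the accounting of Lemma~\ref{lem: pareval} with the length range $[k(n-1)]$ replaced by the cost range $[U]\subseteq[mC]$ and each doubling product replaced by a cost-convolution. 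I expect the main obstacle to be exactly this replacement of the clean length-halving by a cost-convolution, together with checking that the enlarged cost range keeps every quantity polynomially bounded in $Cn$ (so that all logarithms collapse to $O(\log(Cn))$); the identities themselves are immediate from the definitions.
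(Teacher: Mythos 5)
Your proposal is correct, but it takes a genuinely different route from the paper's. The paper disposes of the cost constraint by a reduction: each edge $e$ of cost $c(e)$ is subdivided into a directed path of $c(e)$ edges (the value $f(x_e)$ is assigned to the variable of the first subdivision edge and the field element $1$ to the rest), yielding a network $L'$ on $O(Cn^2)$ vertices in which cost coincides with length; Lemma~\ref{lem: pareval} is then invoked on $L'$ essentially as a black box, and the stated bounds are exactly those of that lemma with $n$ replaced by $O(Cn^2)$ --- this is precisely where the exponents in $O(k^2C^5n^{10}+2^kk^4C^3n^6)$ come from. You instead keep the original network and carry the cost as a second index on the dynamic-programming tables, replacing the clean length-halving product by a convolution over the cost range $[U]\subseteq[mC]$. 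Both arguments are sound and meet the claimed bounds; indeed your accounting gives a somewhat smaller processor count (roughly $O(kC^2n^8+2^kkC^2n^4)$), since you avoid raising the blown-up vertex count $O(Cn^2)$ to the fifth and third powers. What the paper's reduction buys is brevity and reuse of an already-proved lemma; what your convolution buys is a tighter and more transparent dependence on $C$ at the price of re-deriving and re-verifying the two-phase recurrences (in particular, you should still restrict the inner cost index $q$ in the subset recurrence to the admissible cost range of a single walk, as the length index is restricted in the proof of Lemma~\ref{lem: eval}; this does not affect the asymptotics). It is worth noting that a cost-convolution recurrence of exactly your flavour appears in an earlier draft fragment of this paper, so the authors considered your route before settling on the subdivision reduction.
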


\begin{proof} 
The proof reduces to that of Lemma \ref{lem: pareval}.
We replace each directed edge $e$ of cost $c(e)\in [C]$ 
in the network $L$  by
a directed path of length $c(e)$ introducing
$c(1)-1$ additional vertices. With each edge on such a
path, we associate a variable.
We assign $f(x_e)$ to the variable associated with the first
edge on the path replacing $e,$ 
and just $1$ of the field to the
variables associated with the remaining edges on the
path.

The resulting network $L'$ is of size $O(Cn^2).$
Let $H_{L',U}$ be the family of all
proper sets of $k$ walks of total cost $\le U$
in the network $L'$.
We can evaluate the polynomial
$Q_{L',U}=\sum_{S\in H_{L',U}}\prod_{W\in S}M_W$ 
in parallel analogously
as $Q_{L,l}$ in the proof of Lemma \ref{lem: pareval}.
It remains to observe that the value
of $CQ_{L,U}$ under the assignment $f$ is equal to that
of $Q_{L',U}$ under the aforementioned assignment.
\qed \end{proof}

Now, we are ready to derive our main result
in this section.

\begin{theorem}
The minimum cost of 
a set of $k$ mutually vertex-disjoint 
simple directed paths 
connecting $X$ with $Y$ in the 
network $L$ with edge costs in $[C]$
can be computed by a randomized CREW PRAM, with errors
of exponentially small probability in $n,$
running in
$O(k\log (Cn) + \log^2 (Cn))$ time 
and using 
$O(k^2C^6n^{13} + 2^{k}k^4C^4n^{9})$ processors. 

\end{theorem}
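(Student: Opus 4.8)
The plan is to compute the minimum cost by reducing it to a family of feasibility tests, one per candidate value $U$ of the total cost, and to extract the optimum as the smallest feasible $U$, running all the tests in parallel so as not to inflate the running time. By Lemma~\ref{lem: czero}, for a fixed $U\in[mC]$ there is a proper set of $k$ mutually vertex-disjoint walks of total cost $\le U$ in $L$ if and only if $CQ_{L,U}$ is not identically zero over a field of characteristic two. Since each walk can be pruned to a simple directed path with the same endpoints and no larger cost (and, conversely, every such disjoint set of simple paths is a proper set of walks), the existence of such mutually vertex-disjoint walks is equivalent to the existence of $k$ mutually vertex-disjoint simple directed paths of total cost $\le U$ connecting $X$ with $Y$. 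Hence it suffices to test $CQ_{L,U}$ for non-identity with zero for an appropriate range of $U$ and to return the least feasible value.

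For a single $U$ I would set up the randomized test exactly as in the length-bounded case. The polynomial $CQ_{L,U}$ has degree at most $U$: every edge has cost at least one, so the total length of the walks in any set counted by $CQ_{L,U}$ never exceeds their total cost, and therefore $\deg CQ_{L,U}\le U\le mC=O(Cn^2)$. Evaluating $CQ_{L,U}$ over a field $F_{2^{c\log(Cn)}}$ of characteristic two at a uniformly random assignment and invoking Lemma~\ref{lem: zip}, the probability of a nonzero polynomial evaluating to zero is at most $\deg CQ_{L,U}/r=O(Cn^2)/(Cn)^c$, a constant strictly below $1$ for a sufficiently large constant $c$. Taking the disjunction of $O(n)$ independent evaluations yields a one-sided test whose error is $2^{-\Omega(n)}$: it never reports ``feasible'' when $CQ_{L,U}\equiv 0$, and it fails to report ``feasible'' when $CQ_{L,U}\not\equiv 0$ with probability at most $2^{-\Omega(n)}$. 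By Lemma~\ref{lem: cpareval} one evaluation runs in $O(k\log(Cn)+\log^2(Cn))$ time on a CREW PRAM with $O(k^2C^5n^{10}+2^{k}k^4C^3n^6)$ processors, so the amplified test for one $U$ keeps this time and uses an extra factor of $O(n)$ processors.

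To obtain the minimum cost I would note that, whenever a feasible set exists, its optimum total cost $U^\ast$ satisfies $k\le U^\ast\le kC(n-1)\le Cn^2$, since $k$ vertex-disjoint simple paths have total length at most $k(n-1)$ and each edge costs at most $C$. I would run the amplified feasibility test for every integer $U\in[1,Cn^2]$ simultaneously and then compute, in $O(\log(Cn))$ extra time by a parallel minimum, the least $U$ reporting ``feasible''. Running $O(Cn^2)$ tests in parallel multiplies the processor bound of the single amplified test by $O(Cn^2)$, giving $O(Cn^2)\cdot O(n)\cdot\bigl(k^2C^5n^{10}+2^{k}k^4C^3n^6\bigr)=O(k^2C^6n^{13}+2^{k}k^4C^4n^9)$ processors, while the time remains $O(k\log(Cn)+\log^2(Cn))$.

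The step that I expect to demand the most care is keeping the running time within $O(k\log(Cn)+\log^2(Cn))$ while searching over all cost values: a sequential or binary search over $U$ would multiply the time by a logarithmic factor, which is why all $O(Cn^2)$ tests must be launched in parallel, at the cost of the inflated processor count above. The error analysis, by contrast, is clean and requires no union bound. For every $U<U^\ast$ the polynomial $CQ_{L,U}$ is genuinely identically zero by Lemma~\ref{lem: czero}, so those tests report ``not feasible'' with certainty; consequently the least feasible $U$ equals $U^\ast$ precisely when the single test at $U=U^\ast$ succeeds, an event of probability at least $1-2^{-\Omega(n)}$. This bounds the overall one-sided error by $2^{-\Omega(n)}$ and, together with the time and processor counts above, establishes the theorem.
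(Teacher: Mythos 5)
Your proposal is correct and follows essentially the same route as the paper: the paper likewise observes that the optimum lies in $[Cn^2]$, tests $CQ_{L,U}$ for non-identity with zero for all $U\in[Cn^2]$ in parallel via Lemma~\ref{lem: czero}, and combines Lemmata~\ref{lem: zip} and~\ref{lem: cpareval} with $O(n)$-fold amplification exactly as in Theorem~\ref{theo: pdec}, arriving at the processor bound $O(Cn^2\times n\times(k^2C^5n^{10}+2^{k}k^4C^3n^{6}))$. Your write-up merely makes explicit some details the paper leaves implicit (the degree bound on $CQ_{L,U}$, the final parallel minimum over feasible $U$, and the one-sided error analysis), all of which are sound.
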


\begin{proof}
The minimum cost of the sought set of vertex-disjoint paths
is in $[Cn^2].$ Hence, by Lemma \ref{lem: czero}, 
it is sufficient to test the polynomials $CQ_{L,U}$
for non-identity with zero for all $U\in [Cn^2]$
in parallel. By applying Lemmata \ref{lem: zip} and \ref{lem: cpareval}
in a manner analogous to the proof of Theorem \ref{theo: pdec},
we conclude that 
it can be done by a randomized CREW PRAM, with
one-sided errors of exponentially small in $n$
probability, running in 
$O(k\log (Cn) + \log^2 (Cn))$ time 
and using\\ 
$O(Cn^2\times n \times (k^2C^5n^{10} + 2^{k}k^4C^3n^{6}))$
processors. 
\qed \end{proof}

\section{Finding vertex-disjoint connecting paths}

A straightforward approach of extending
our randomized parallel method for deciding if
there is a proper set of $k$ mutually vertex-disjoint
walks (of a bounded total cost)
between two sets of vertices of cardinality
$k$ to include the finding variant
could be roughly as follows. 
In parallel, for each $k$-tuple of respective neighbors
of the $k$ start vertices in $X,$
replace the set of start
vertices by the $k$-tuple
and apply our method recursively to the
resulting network. If the test is positive,
the first edges on the walks are known,
and we can iterate the method. 
The problem with this approach
is that its recursive depth is proportional
to the maximum length of a walk
in the resulting set of mutually vertex-disjoint
walks between $X$ and $Y.$
\junk{Thus, this approach could lead to a fast parallel
algorithm solely when there is a set 
of mutually vertex-disjoint
walks between the two sets satisfying the
total cost criterion, where each walk
has at most poly-logarithmic length.}

Also, it is not clear how one could
implement 
a straightforward divide-and-conquer approach
of guessing intermediate vertices
in order to find a set of
$k$ mutually-vertex disjoint walks 
of a given cost efficiently
in parallel. 
\junk{Namely, one could guess a  set of
intermediate vertices on the walks in
the aforementioned set, one for each walk,
form new terminal sets including copies
of the intermediate vertices and run
our test method on the resulting network.
The problem with this latter approach is that
the number of terminals, i.e., start and end vertices,
would grow exponentially in the recursion depth.
Consequently, the applications of our test method would require
too much time and too many processors.
Interestingly enough, at the bottom of 
the divide-and-conquer method, we would
encounter perfect bipartite matching problems
which admit RNC solutions not only
in the decision variant but also the finding
variant \cite{KUW86,MVV87,V93}.}

We need more advanced methods
to obtain a very fast parallelization of the
finding variant. 
We shall modify the edge cost in
the network $L$ 
in order to use the so called {\em isolation lemma}
in a manner analogous to the RNC method of finding a
perfect matching given in \cite{V93}. 

\begin{lemma}(The isolation lemma \cite{V93}).
Let $F$ be a family of subsets of a set with $q$ elements
and let $r$ be a non-negative integer. 
Suppose that each element $s$ of the set is independently 
assigned a weight $w(s)$ uniformly at random from 
$[r]$, and the weight of a subset $S$ in $F$ is defined as
$w(S) = \sum_{x \in S} w(x).$
Then, the probability that there is a unique set in 
$F$ of minimum weight is at least $1 -\frac qr.$
\end{lemma}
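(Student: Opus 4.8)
The plan is to translate the event ``the minimum-weight set is not unique'' into a statement about individual elements of the ground set, and then bound the contribution of each element separately. Call an element $x$ of the $q$-element ground set \emph{ambiguous} (for a given weight assignment) if some minimum-weight set of $F$ contains $x$ while some other minimum-weight set of $F$ does not contain $x$. The first observation is that the minimum-weight set fails to be unique if and only if at least one element is ambiguous: if $S_1\neq S_2$ are two distinct minimum-weight sets, then their symmetric difference is nonempty, and any element in it is ambiguous; conversely, an ambiguous element witnesses two distinct minimum-weight sets directly by definition. Thus it suffices to bound the probability that a fixed $x$ is ambiguous and then take a union bound.

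The crux is the claim that, for each fixed element $x$, the probability that $x$ is ambiguous is at most $1/r$. I would prove this by the principle of deferred decisions: fix the weights $w(s)$ of all elements $s\neq x$ arbitrarily, and treat $w(x)$ as the only remaining random quantity. Define
$$\alpha=\min\{\,w(S)-w(x)\ :\ x\in S\in F\,\},\qquad \beta=\min\{\,w(S)\ :\ x\notin S\in F\,\},$$
with the convention that an empty minimum equals $+\infty$. Both $\alpha$ and $\beta$ are determined by the already-fixed weights and are therefore independent of $w(x)$. The minimum weight achieved among sets containing $x$ equals $w(x)+\alpha$, while the minimum among sets avoiding $x$ equals $\beta$. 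Element $x$ is ambiguous precisely when these two minima coincide and are both globally minimal, which forces $w(x)+\alpha=\beta$, i.e.\ $w(x)=\beta-\alpha$. Since $\beta-\alpha$ is a single value fixed before $w(x)$ is drawn, and $w(x)$ is uniform over the $r$ values of $[r]$, the probability of this coincidence is at most $1/r$ (and is $0$ when $\alpha$ or $\beta$ is infinite).

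Finally, I would assemble the pieces by a union bound over the $q$ ground-set elements: the probability that \emph{some} element is ambiguous is at most $q\cdot\frac1r=\frac qr$, so by the equivalence of the first paragraph the probability that the minimum-weight set is unique is at least $1-\frac qr$. The only delicate point---and the step I expect to require the most care in a fully rigorous write-up---is the independence argument in the second paragraph: one must verify that $\alpha$ and $\beta$ genuinely do not depend on $w(x)$, which is exactly why the deferred-decision conditioning (fixing all other weights first) is essential. Everything else is a routine union bound.
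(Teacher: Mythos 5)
Your proof is correct, and it is the standard argument for the isolation lemma (the one given by Mulmuley, Vazirani and Vazirani): the paper itself offers no proof, merely citing \cite{V93}, so there is nothing to diverge from. The deferred-decision step, the equivalence between non-uniqueness and the existence of an ambiguous element, and the union bound are all handled properly; the only cosmetic caveat is the degenerate case where $F$ is empty or where no set of $F$ contains (or avoids) $x$, which your $+\infty$ convention already covers.
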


\begin{corollary}\label{cor: ciso}
For each of the $m$
edges $e$ in the network $L,$ 
modify its cost $c(e)$ to $c'(e)=c(e)rm +w(e),$
where the weight $w(e)$ is drawn uniformly at random from 
$[r]$.
Then, the probability that there 
is a unique minimum-cost set of mutually vertex-disjoint paths
connecting $X$ with $Y$ in the
edge weighted network $L$ is at least $1 -\frac mr.$
\end{corollary}
\begin{proof}
To use the isolation lemma, let 
the underlying set to consist
of all edges in the network $L.$ Next, note
that a set of mutually vertex-disjoint paths
connecting $X$ with $Y$ achieving a minimum cost
consists of simple paths and thus it can be identified
with the set of edges on the paths.
Let $P$ be the family of all sets of
mutually vertex-disjoint simple paths
connecting $X$ with $Y$ in the network $L.$
By the setting of new costs $c'(e),$ solely
those sets in $P$ that achieved the minimum cost, say $D,$
under the original costs $c(e)$ can achieve
a minimum cost under the new costs $c'(e).$
So, we can set $F$ to the aforementioned sub-family of $P$,
and define the weight of a 
set of $k$ paths in $F$ as the sum
of the weights $w(e)$ of the edges $e$ on the paths
in this set in order to use the isolation lemma. 
By the isolation lemma, there is a unique set $S$ in $F$
that achieves the minimum weight $w(S)$ with the
probability at least $1 -\frac mr.$ The corollary
follows since each set $S$ in $F$ has the
cost $c'(S)$ equal to $D+w(S).$
\qed \end{proof}

Throughout the rest of this section, we shall assume 
that each of the $m$
edges $e$ in the network $L$ is assigned the cost $c'(e)$
as in Corollary \ref{cor: ciso} and that $r\in [n^{O(1)}].$ 

Suppose that we know the minimum cost of a set of $k$
mutually vertex-disjoint paths connecting $X$ with $Y$
in the network $L$ with the edge costs indicated, and such
a minimum-cost set is unique. Then, it is sufficient
to show that we can test quickly in parallel if the 
network $L$ with an arbitrary edge removed
still contains a set of $k$
mutually vertex-disjoint paths connecting $X$ with $Y$
that achieves the minimum cost. By performing the
test for each edge of $L$ in parallel, we
can determine the set of edges forming the
unique minimum-cost set of $k$
mutually vertex-disjoint paths connecting $X$ with $Y$.

To carry out these tests, we need to
generalize the polynomial $CQ_{L,U}$ to a polynomial $CP_{L,e,U}$,
where $e$ is an edge in $L$ and
$U$ is a cost constraint from $[mr(mC+1)]=[Cn^{O(1)}]$.
Let $H_{L,e,U}$ be the family of all proper 
sets of $k$ walks in the network $L$ 
with the edge $e$ removed that
have total at most $U.$
(In the total cost of a set of walks, 
we count the cost of an edge  the number 
of times equal to the sum of the multiplicities of the edge
in the walks.)

As in the definition
of $Q_{L,l}$
assign a distinct variable $x_e$ to each edge
$e$ in $L,$ and for a walk $W\in H_{L,e,W},$ let
$M_W$ be the monomial, where
$x_e$ has multiplicity equal
to the number of occurrences
of $e$ in $W.$ The polynomial
$CP_{L,e,U}$ is defined by
$\sum_{S\in H_{L,e,W}}\prod_{W\in S}M_W.$

By using the proof method of 
Lemma \ref{lem: zero}, we obtain the following counterpart of this
lemma for $CP_{L,e,U}$.

\begin{lemma} \label{lem: gzero}
For the edge-costed network $L$ with $m$ edges, 
edge $e,$ and $U\in [Cn^{O(1)}],$ 
there is a proper set of 
$k$ mutually vertex-disjoint walks
of total cost $\le U$ in 
the network $L$ with the edge $e$ removed
iff $CP_{L,e,U}$ is not identical
to zero over a field of characteristic two.
\end{lemma}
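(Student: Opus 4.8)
The plan is to repeat, essentially verbatim, the argument proving Lemma~\ref{lem: zero}, substituting ``total cost $\le U$'' for ``total length $\le l$'' and the network $L$ with edge $e$ deleted for $L$ itself. The whole argument rests on the involution $\phi$ introduced before Lemma~\ref{lem: zero}, so the first thing I would do is check that $\phi$ still acts on the family $H_{L,e,U}$.

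First I would observe that $\phi$ preserves the multiset of edge-occurrences of a proper set $S$: switching the two suffixes at their first common vertex merely reassigns existing edge-occurrences between two walks, creating and destroying nothing. From this single observation three facts follow at once. (i) $\phi(S)$ uses exactly the edges that $S$ uses, so if $S$ avoids the deleted edge $e$ then so does $\phi(S)$; (ii) the total cost of $S$, which counts each edge cost with its multiplicity, is unchanged, so the constraint $\le U$ is preserved; and (iii) $\prod_{W\in S}M_W=\prod_{W\in \phi(S)}M_W$. Together with the already-established facts that $\phi$ preserves the signature and satisfies $\phi(\phi(S))=S$, this shows that $\phi$ is a cost- and monomial-preserving involution on $H_{L,e,U}$.

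I would then partition $H_{L,e,U}$ into the sub-family $H^{1}$ of sets with undefined signature (equivalently, the fixed points of $\phi$, which are exactly the pairwise vertex-disjoint walk sets) and the sub-family $H^{2}=H_{L,e,U}\setminus H^{1}$. Over a field of characteristic two, $\sum_{S\in H^{2}}\prod_{W\in S}M_W=0$, because $\phi$ pairs each $S\in H^{2}$ with a distinct partner $\phi(S)\ne S$ carrying the identical monomial, so the two contributions cancel. Hence $CP_{L,e,U}\equiv\sum_{S\in H^{1}}\prod_{W\in S}M_W$. For the forward direction, given a set of $k$ mutually vertex-disjoint simple paths of cost $\le U$ in $L$ with $e$ removed, this set lies in $H^{1}$ and its monomial is squarefree; no other member of $H^{1}$ can produce it, since a pairwise vertex-disjoint walk set using each of these edges exactly once must traverse precisely these paths, so the monomial survives and $CP_{L,e,U}\not\equiv 0$. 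Conversely, if $CP_{L,e,U}\not\equiv 0$ then, as the $H^{2}$ part vanishes, some surviving monomial comes from a member of $H^{1}$; pruning each of its (vertex-disjoint) walks to a simple path only lowers the cost and certainly does not reintroduce $e$, yielding the required $k$ vertex-disjoint simple paths of cost $\le U$ in $L$ with $e$ removed.

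The only genuinely new point relative to Lemma~\ref{lem: zero} --- and hence where I would be most careful --- is verifying in the first step that $\phi$ respects \emph{both} the cost bound and the deletion of $e$. Both are immediate consequences of the edge-multiset-invariance of the suffix swap, after which every additive, edge-determined quantity (length, cost, and the monomial) is automatically preserved and the remainder of the proof is identical to that of Lemma~\ref{lem: zero}.
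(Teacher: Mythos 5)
Your proposal is correct and follows exactly the route the paper intends: the paper gives no separate proof of this lemma, stating only that it follows ``by using the proof method of Lemma~\ref{lem: zero},'' and your argument is precisely that proof transplanted, with the one genuinely new point --- that the suffix-swap involution $\phi$ preserves the edge multiset and hence both the cost bound and the avoidance of the deleted edge $e$ --- correctly identified and verified.
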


Next, we obtain
the counterpart of Lemma \ref{lem: pareval} for
$CP_{L,e,U}$ following the proof of Lemma \ref{lem: cpareval}.

\begin{lemma}\label{lem: gcpareval}
$CP_{L,e,U}$ can be evaluated for a given assignment 
of values 
 over a field $F_{2^{O(\log n)}}$ of characteristic two
in $O(k\log (Cn) + \log^2 (Cn))$ time 
by a PRAM using 
$2^{k}(kCn)^{O(1)}$ processors.
\end{lemma}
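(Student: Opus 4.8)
The plan is to reduce the evaluation of $CP_{L,e,U}$ to an instance of the evaluation problem already solved in Lemma~\ref{lem: cpareval}, exactly as Lemma~\ref{lem: cpareval} was itself reduced to Lemma~\ref{lem: pareval}. The only two features distinguishing $CP_{L,e,U}$ from $CQ_{L,U}$ are (i) the single edge $e$ has been deleted, and (ii) the cost constraint $U$ now ranges over $[Cn^{O(1)}]$ rather than $[mC]$, because the modified costs $c'(e)=c(e)rm+w(e)$ with $r\in[n^{O(1)}]$ are polynomially larger. Neither feature is structural, so I expect the proof to be a short verification rather than a new argument.

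First I would handle the edge deletion: let $L^-$ denote the network $L$ with the edge $e$ removed. By definition, $H_{L,e,U}$ is precisely the family of proper sets of $k$ walks of total cost $\le U$ in $L^-$, so $CP_{L,e,U}$ is literally the polynomial $CQ_{L^-,U}$ for the smaller network $L^-$. Deleting one edge decreases the edge count and vertex count by at most a constant and does not change the cost range $[C]$ of the surviving edges, so $L^-$ satisfies all the hypotheses of Lemma~\ref{lem: cpareval} with the same asymptotic parameters $k$, $C$, $n$.

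Next I would absorb the enlarged cost bound. As in the proof of Lemma~\ref{lem: cpareval}, replace each edge of cost $c'(e)$ by a directed path of that length (assigning the original variable to the first edge of the path and the field element $1$ to the rest), obtaining an expanded network $L'$. Since each $c'(e)\le mr(mC+1)=Cn^{O(1)}$, the expanded network has size $n^{O(1)}\cdot C$, i.e.\ polynomial in $Cn$, and the total-cost constraint $U\in[Cn^{O(1)}]$ becomes a total-\emph{length} constraint of the same polynomial magnitude on $L'$. Running the parallel doubling-plus-subset-DP evaluation of Lemma~\ref{lem: pareval} on $L'$ then takes $O(k\log(Cn)+\log^2(Cn))$ time, since the length parameter and network size are both $\mathrm{poly}(Cn)$ and the logarithm of a polynomial in $Cn$ is $O(\log(Cn))$; the processor count is $\mathrm{poly}(Cn)$ times the $2^k\,\mathrm{poly}(k)$ factor contributed by the subset dynamic programming over subsets of $Y$, which is subsumed by the stated bound $2^{k}(kCn)^{O(1)}$. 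I would close by observing that the value of $CP_{L,e,U}$ under the given assignment equals the value of $Q_{L',U}$ under the induced assignment, exactly as in Lemma~\ref{lem: cpareval}.

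The main obstacle, such as it is, will be purely bookkeeping: confirming that the polynomial blow-up in the cost bound $U$ stays $\mathrm{poly}(Cn)$ after the path-expansion, so that the $\log$ and $\log^2$ time factors remain $O(\log(Cn))$ and the processor count remains $2^{k}(kCn)^{O(1)}$ rather than acquiring an extra $r$-dependence. Since $r\in[n^{O(1)}]$ by assumption, this dependence folds into the $n^{O(1)}$ already present, and no genuinely new estimate is needed.
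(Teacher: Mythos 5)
Your proposal is correct and follows essentially the same route as the paper, which gives no explicit proof of this lemma beyond the remark that it is obtained ``following the proof of Lemma~\ref{lem: cpareval}''; your argument is exactly that reduction spelled out: identify $CP_{L,e,U}$ with $CQ_{L^-,U}$ for the network with $e$ deleted, subdivide each edge into a path of length equal to its (modified) cost, and invoke the parallel evaluation of Lemma~\ref{lem: pareval} on the expanded network of size $(Cn)^{O(1)}$. Your bookkeeping that the enlarged cost range $[Cn^{O(1)}]$ induced by $c'(e)=c(e)rm+w(e)$ with $r\in[n^{O(1)}]$ keeps the time at $O(k\log(Cn)+\log^2(Cn))$ and the processor count at $2^{k}(kCn)^{O(1)}$ is exactly what the paper's stated bounds presuppose.
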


Now, we are ready to derive our main result in this section.

\begin{theorem}\label{theo: cpfind}
There is a randomized PRAM returning 
almost certainly 
(i.e., with probability
at least $1-\frac 1{n^{\alpha}}$, where $\alpha \ge 1$)
a minimum-cost set of $k$ mutually vertex-disjoint paths 
connecting
$X$ with $Y$ in the network $L$ 
with the original edge costs in $[C]$
(iff such a set exists)
in $O(k\log (Cn) + \log^2 (Cn))$ time  using 
$2^{k}(kCn)^{O(1)}$ processors.
\end{theorem}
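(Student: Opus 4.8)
The plan is to reduce the finding problem to a family of decision tests rendered unambiguous by the isolation lemma. First I would invoke Corollary~\ref{cor: ciso} with $r=n^{\beta}$ for a sufficiently large constant $\beta$, replacing each original cost $c(e)$ by $c'(e)=c(e)rm+w(e)$ with $w(e)$ uniform in $[r]$. Then, with probability at least $1-m/r$, the network carries a \emph{unique} minimum-cost set $S^\ast$ of $k$ mutually vertex-disjoint paths connecting $X$ with $Y$ under $c'$; moreover, as noted in the proof of Corollary~\ref{cor: ciso}, this $c'$-optimal set is also optimal under the original costs $c$, and because all costs stay positive it consists of simple paths. If no set of $k$ vertex-disjoint paths exists at all, this will be detected below, so the ``iff such a set exists'' clause is handled uniformly.

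Second, I would compute the optimal value $D^\ast$ under $c'$. Since $D^\ast\in[Cn^{O(1)}]$, I would test the Section~4 polynomial $CQ_{L,U}$ for non-identity with zero over $F_{2^{O(\log n)}}$ simultaneously for every candidate $U$ in this range, using Lemmata~\ref{lem: czero} and~\ref{lem: cpareval} to evaluate and Lemma~\ref{lem: zip} (amplified by $O(n)$ independent random assignments) to test; a parallel minimum over the outcomes returns the smallest passing $U$, which equals $D^\ast$, while a universally failing outcome certifies that no path set exists.

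The central step is the edge-membership test. By the uniqueness of $S^\ast$, an edge $e$ lies in $S^\ast$ precisely when deleting $e$ destroys \emph{every} optimal solution, i.e.\ when the network $L$ with $e$ removed has no proper set of $k$ vertex-disjoint walks of total cost $\le D^\ast$; by Lemma~\ref{lem: gzero} this is equivalent to $CP_{L,e,D^\ast}$ being identically zero over a field of characteristic two. I would therefore, in parallel over all edges $e$, evaluate $CP_{L,e,D^\ast}$ via Lemma~\ref{lem: gcpareval} under $O(n)$ random assignments from $F_{2^{O(\log n)}}$ and apply Lemma~\ref{lem: zip}; the edges whose test comes back identically zero are declared to be exactly the edges of $S^\ast$. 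Tracing these edges from each source in $X$ then recovers the $k$ simple paths, which are output.

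Finally I would collect the bounds. Each polynomial evaluation runs in $O(k\log(Cn)+\log^2(Cn))$ time on $2^{k}(kCn)^{O(1)}$ processors by Lemma~\ref{lem: gcpareval}, and carrying out all $O(n^2)$ edge tests, all $O(Cn^{O(1)})$ cost values, and the $O(n)$ amplification trials at once multiplies only the processor count by a polynomial in $Cn$, preserving both stated bounds. A union bound over the single isolation event (failure $\le m/r$) and the polynomially many identity tests (each with exponentially small error) keeps the total failure probability below $1/n^{\alpha}$. The step I expect to be the main obstacle is the rigorous justification of the edge-membership equivalence: it hinges entirely on $S^\ast$ being unique, for otherwise deleting an edge used by one optimal solution might leave a distinct optimal solution of the same cost intact, which would break the forward direction of the ``if and only if'' and cause the test to misclassify edges.
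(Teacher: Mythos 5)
Your proposal is correct and follows essentially the same route as the paper: randomize the costs via Corollary~\ref{cor: ciso} to isolate a unique optimum, then decide membership of each edge $e$ by testing $CP_{L,e,U}$ for identity with zero using Lemmata~\ref{lem: gzero}, \ref{lem: gcpareval} and \ref{lem: zip}, with all edges, cost bounds $U$, and amplification trials handled in parallel. The only (immaterial) organizational difference is that you first pin down the optimal value $D^{\ast}$ via the $CQ_{L,U}$ tests and then run the edge tests at $U=D^{\ast}$, whereas the paper runs the edge tests for every candidate $U$ and structurally verifies that the surviving edges form $k$ vertex-disjoint $X$--$Y$ paths.
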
 
\begin{proof}
We set $r$ to, say, $n^2m,$ and specify
the new edge costs $c'(e)$ in the network $L$ 
drawing the weights $w(e)$ uniformly at random
from $[r]$
as in Corollary \ref{cor: ciso}.
Next, for each $U \in [mr(mC+1)]=[Cn^{O(1)}],$ we proceed in
parallel as follows. For each edge $e$ of the network
$L$, we test the polynomial $CP_{L,e,U}$ for the non-identity with
zero by using Lemma \ref{lem: zip}
and Lemma \ref{lem: gcpareval} (we can perform a linear in
$n$ number of such tests in parallel in order to decrease
the probability of the one-sided error to an exponentially small
one). Next, we verify if the edges that passed the test
positively yield a set of $k$ mutually vertex-disjoint paths 
connecting
$X$ with $Y$. For example, it can be done
by checking for each endpoint of the edges outside $X\cup Y$ 
if it is shared by exactly two of the edges, and then computing and examining
the transitive closure of the graph induced by the edges
(see \cite{R93}).
If so, we save the resulting set of paths
of total (new) cost $\le U.$
By Corollary \ref{cor: ciso}, there is a $U \in [Cn^{O(1)}]$ for which
the above procedure will find
such a set of paths that achieves the minimum (original) cost with
probability at least $1-\frac 1 n.$
\qed \end{proof}

\section{Minimum-cost logarithmic integral flow is in $RNC^2$}

The following lemma is a straightforward
generalization of a folklore reduction
of maximum integral flow to a corresponding
disjoint connecting path problem (for instance cf. \cite{E79})
to include minimum-cost integral flow. We shall call a flow proper,
if it ships each flow unit along
a simple path from the source to the sink.

\begin{lemma}\label{lem: red}
The problem of whether or not 
there is a proper integral flow 
of value $k$ and cost $D$ from
a distinguished source vertex
$s$ to a distinguished sink vertex $t$
in a directed network with $n$ vertices.
integral
edge capacities and edge costs in $[C]$
can be (many-one) reduced
to that
of whether or not there is
set of $k$ mutually vertex-disjoint 
simple directed paths of total cost $D^*$,
where $\lfloor \frac {D^*} {kn} \rfloor =D,$
connecting two distinguished sets
of $k$ vertices in a directed
network on $O(kn^2)$ vertices
in $O(\log k +\log n)$
time by a CREW PRAM
using $O(kn^2+k^2n)$ processors.
\end{lemma}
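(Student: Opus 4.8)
The plan is to establish the reduction in the style of the classical max-flow-to-disjoint-paths correspondence, but augmented with a vertex-splitting gadget to handle capacities and a cost-scaling trick to encode both the flow cost $D$ and the path-count constraint into a single total-cost target $D^*$. I would first build a new network $L$ from the input network by the standard device that turns edge-disjointness into vertex-disjointness: split each original vertex $v$ (other than $s$ and $t$) into an \emph{in}-copy $v_{\mathrm{in}}$ and an \emph{out}-copy $v_{\mathrm{out}}$ joined by an internal edge, and replace each edge of capacity $u$ by $u$ parallel edges. Since we seek a flow of value $k$, it suffices to keep at most $k$ parallel copies per edge, so the new network has $O(kn^2)$ vertices. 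The source $s$ and sink $t$ are each replaced by $k$ terminal copies, giving the two distinguished sets $X$ and $Y$ of size $k$. A proper integral flow of value $k$ decomposes into $k$ edge-disjoint $s$--$t$ paths, and under the vertex-splitting gadget these correspond exactly to $k$ \emph{vertex}-disjoint paths from $X$ to $Y$ in $L$.

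The second key step is the cost encoding. I want the total-cost target $D^*$ to satisfy $\lfloor D^*/(kn)\rfloor = D$, which means the original flow cost $D$ must occupy the high-order digits (base $kn$) of $D^*$, while the low-order part absorbs an unavoidable but bounded additive contribution. The natural scheme is to multiply every original edge cost by $kn$ when assigning costs in $L$, so that a collection of paths realizing flow-cost $D$ contributes exactly $D\cdot kn$ to the high part. The catch is that the $k$ vertex-disjoint paths each use some number of the zero-or-newly-introduced gadget edges (the internal split edges and the routing to the $X$/$Y$ terminals), and these contribute a residual cost that I must bound strictly below $kn$ so that the floor operation recovers $D$ cleanly. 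Each of the $k$ paths is simple, hence of length at most $n-1$ in terms of original vertices, so assigning a small cost (say $1$) to each internal gadget edge yields a residual total strictly less than $k\cdot n$, keeping $D^* = D\cdot kn + (\text{residual})$ inside the correct floor interval. I would verify this counting carefully.

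The third step is the parallel implementation and its resource analysis, which is routine once the gadget is fixed. Constructing $L$ — splitting vertices, making up to $k$ parallel copies of each edge, and laying down the terminal sets — is entirely local and can be done by a CREW PRAM in $O(\log k + \log n)$ time (the logarithmic factors come only from the prefix-sum-style indexing needed to number the parallel copies and the split vertices). The processor count $O(kn^2 + k^2 n)$ matches the number of edges and vertices produced: $O(kn^2)$ for the parallel edge copies over the at most $n^2$ original edges and $O(k^2 n)$ for wiring the $k$ source/sink terminals. Since the reduction is many-one and purely structural, correctness of the parallel construction reduces to correctness of the combinatorial correspondence established in the first two steps.

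The main obstacle I expect is the cost-encoding bookkeeping in step two: getting the residual gadget cost to land strictly inside $[0, kn)$ so that $\lfloor D^*/(kn)\rfloor$ is exactly $D$ and never $D+1$ or $D-1$, while simultaneously ensuring that \emph{every} valid flow of cost $D$ maps to \emph{some} disjoint-path solution with the same quotient and conversely. The forward direction (flow $\Rightarrow$ paths) is straightforward via flow decomposition, but the converse requires arguing that a vertex-disjoint path solution in $L$ respects the capacity constraints when projected back — which the vertex-splitting and the cap of $k$ parallel copies guarantee — and that its projected cost is indeed $D$. Pinning down the exact residual bound, rather than the asymptotic claim, is where I would spend the real effort.
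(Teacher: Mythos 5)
There is a genuine gap in your gadget, and it is not in the cost bookkeeping where you expected to spend your effort, but in the vertex-splitting itself. You split each original vertex $v$ into a \emph{single} in-copy $v_{\mathrm{in}}$ and a \emph{single} out-copy $v_{\mathrm{out}}$ joined by one internal edge, and you simulate edge capacities by parallel edges. This enforces a vertex capacity of one on every original intermediate vertex: two flow units that pass through the same vertex $v$ (via different incoming and outgoing edges, which a proper flow of value $k$ is perfectly entitled to do) would both have to traverse $v_{\mathrm{in}}$ and $v_{\mathrm{out}}$, so the corresponding paths in $L$ are never vertex-disjoint. Parallel edges do not repair this, since parallel copies share their endpoints; so your forward direction (flow $\Rightarrow$ $k$ vertex-disjoint paths) fails, and the reduction produces false negatives whenever every optimal flow routes two units through a common intermediate vertex. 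Your own vertex count of $O(kn^2)$ is a symptom of the mismatch: the construction you describe has only $O(n)$ split vertices plus whatever subdivision points you add on the parallel edges.

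The paper's construction resolves exactly this point: for each vertex $v$ it creates one in-port $v_{\mathrm{in}}(e,i)$ per unit $i\in[\mathrm{capacity}(e)]$ of each incoming edge $e$, and one out-port $v_{\mathrm{out}}(e',i')$ per unit of each outgoing edge $e'$, joins every in-port to every out-port by an internal edge of cost $1$, and replaces each original edge $e=(v,w)$ by $\mathrm{capacity}(e)$ vertex-disjoint tracks $v_{\mathrm{out}}(e,i)\to w_{\mathrm{in}}(e,i)$ of cost $c(e)\,kn$. Now up to $k$ paths can pass through the same original vertex on distinct port pairs, while the number of tracks still enforces the edge capacity; this is where the $O(kn^2)$ vertices and the $O(k^2n)$ processor term (for wiring the terminal sets and the port bipartite gadgets) actually come from. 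Your cost-scaling scheme (multiply original costs by $kn$, charge $1$ per gadget edge, bound the residual strictly below $kn$) and your parallel-implementation analysis match the paper's and would go through essentially unchanged once the gadget is corrected.
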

\begin{proof}
Let $K=(V,E)$ be the directed network
with integral edge capacities, edges costs
in $[C]$ and 
the distinguished source vertex
$s$ and sink vertex $t.$ 
Since we are interested in a flow 
of value $k$, we can assume w.l.o.g that
all edge capacities do not exceed $k.$

We form a directed network $K^*$ 
on the basis of the network $K$ as follows.

Let $v\in V.$ Next, let $E_{in}(v)$ be the set of edges
in $K$ incoming into $v$, and let $E_{out}(v)$ be
the set of edges in $K$ leaving $v.$ 
For each $e\in E_{in}(v)$ and $i\in [capacity(e)],$
we create the vertex $v_{in}(e,i).$ Analogously,
for each  $e'\in E_{out}(v)$ and
$i'\in [capacity(e')],$
we create the vertex $v_{out}(e',i').$
Furthermore, we direct an edge from each vertex $v_{in} (e,i)$
to each vertex $v_{out} (e',i')$. To each such an edge,
we assign the cost $1.$
Also, for each edge $e=(v,w)$ of $K,$ 
we direct an edge from $v_{out} (e,i)$ to 
$w_{in}(e,i)$ for $i\in [capacity(e)].$ To each such an edge,
we assign the cost $c(e)kn.$
See Fig. 2.
\begin{figure}
 \label{fig:paths1}
\begin{center}
\includegraphics[height=5cm]{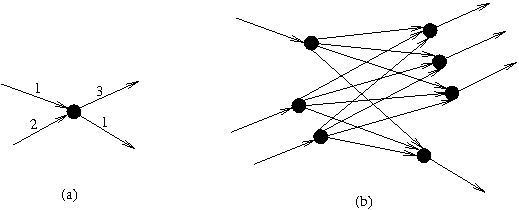}
\end{center}
\caption{An example of a vertex of 
the network $K$ and the corresponding
part of the network $K^*.$}
\end{figure}
Let $X'$ be the set of vertices of the form $s_{out}(...)$,
and let $Y'$ denote the set of vertices of the form
$t_{in}(...).$ Create an additional set $X$ of $k$
vertices and from each vertex in $X$ direct an edge
to each vertex in $X'$. Symmetrically, create another additional
set $Y$ of $k$ vertices and from each vertex in $Y'$ direct an
edge to each vertex in $Y.$

It is easy to observe that 
there is a proper integral flow 
of value $k$ and cost $D$  from
$s$ to $t$ in the network $K$ iff there is
a set of $k$ mutually vertex-disjoint 
simple paths of total cost $D^*$
connecting $X$ with $Y$ in the network $K^*,$
such that $\lfloor \frac {D^*} {kn} \rfloor =D.$

Now it is sufficient to note that the construction
of $K^*$, $X$ and $Y$  on the basis of $K$ 
easily implemented by a CREW PRAM
in $O(\log k +\log n)$-time using 
$O(kn^2+k^2n)$ processors,
where $n$ is the number of vertices in $K.$
\qed \end{proof}

By combining Theorem \ref{theo: pdec} with Lemma \ref{lem: red}, we obtain
our first main result.

\begin{theorem}
The minimum cost of
a flow of value $k$ in a network
with $n$ vertices, a sink and a source, 
integral edge capacities and positive integral
edge costs in $[C]$
can be found by a randomized PRAM, with errors
of exponentially small probability in $n,$
running in $O(k\log (Ckn)+\log^2 (Ckn)$ time
and using 
$2^{k}(kCn)^{O(1)}$ processors. 
\end{theorem}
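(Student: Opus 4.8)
The plan is to compose the reduction of Lemma~\ref{lem: red} with the Section~4 theorem on the minimum cost of a set of $k$ mutually vertex-disjoint paths. First I would apply Lemma~\ref{lem: red} to the input network $K$ on $n$ vertices with edge costs in $[C]$, producing in $O(\log k + \log n)$ time, using $O(kn^2 + k^2 n)$ processors, the companion network $K^*$ on $O(kn^2)$ vertices together with the two terminal sets $X,Y$ of size $k$. By the construction of $K^*$, its edge costs lie in $[Ckn]$: the gadget edges cost $1$ while each edge inheriting an original edge $e$ costs $c(e)kn \le Ckn$.

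Next I would invoke the minimum-cost vertex-disjoint connecting-paths theorem on $K^*$ to compute the minimum total cost $D^*$ of $k$ mutually vertex-disjoint simple paths connecting $X$ with $Y$. Substituting the parameters $n \mapsto O(kn^2)$ and $C \mapsto Ckn$ into that theorem's bounds, the running time becomes $O\!\left(k\log(Ckn\cdot kn^2) + \log^2(Ckn \cdot kn^2)\right) = O(k\log(Ckn) + \log^2(Ckn))$, since $\log(Ck^2n^3) = O(\log(Ckn))$, and the processor count becomes $2^{k}(k\cdot Ckn\cdot kn^2)^{O(1)} = 2^{k}(kCn)^{O(1)}$. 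The sought minimum flow cost is then recovered in $O(1)$ further time as $D = \lfloor D^*/(kn)\rfloor$, which dominates none of the above bounds. The error probability is inherited directly from the Section~4 theorem and is exponentially small in $n$, the reduction itself being deterministic.

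The main point to justify is correctness of the \emph{optimization} reduction, i.e.\ that the minimum of $D^*$ maps through the floor onto the minimum flow cost $D$. Lemma~\ref{lem: red} only states the decision-level correspondence between a flow of cost $D$ and disjoint paths of total cost $D^*$ with $\lfloor D^*/(kn)\rfloor = D$. To lift this to the optima I would argue that the scaling is faithful: each of the $k$ paths in $K^*$ corresponds to a simple path from the source to the sink in $K$ and therefore crosses at most $n$ gadgets, so the total cost contributed by the unit-cost gadget edges over all $k$ paths is strictly below $kn$. Hence for any disjoint-paths solution, $D^* = D\cdot kn + g$ with the gadget-edge term $g \in [0,kn)$, so $\lfloor D^*/(kn)\rfloor$ equals the cost of the corresponding flow exactly, and the map is monotone. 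Consequently minimizing $D^*$ minimizes $D$, and the answer $\lfloor D^*/(kn)\rfloor$ is precisely the minimum cost of a flow of value $k$.

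The one delicate verification is this bound $g < kn$ on the accumulated gadget-edge cost, which hinges on each path being simple (so that it visits at most $n$ original vertices and hence uses fewer than $n$ unit-cost internal edges) and on the $kn$ scaling factor being chosen to strictly dominate it. This is exactly the place where the construction of $K^*$ in Lemma~\ref{lem: red} was tailored, and it is the only step requiring care beyond a routine substitution of parameters into the Section~4 complexity bounds.
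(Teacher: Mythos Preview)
Your proposal is correct and follows the paper's approach: the paper's proof is the single line ``By combining Theorem~\ref{theo: pdec} with Lemma~\ref{lem: red}, we obtain our first main result,'' which---given that the statement concerns costs rather than lengths---must really mean the Section~4 minimum-cost paths theorem composed with Lemma~\ref{lem: red}, exactly as you do. Your additional justification that the optimization version follows from the decision-level reduction (via the bound $g<kn$ on the total gadget-edge contribution) is more detail than the paper supplies, and is correct for the optimum since positive costs force any minimum-cost simple path in $K^*$ to project to a simple path in $K$.
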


\junk{Note that $\log (ab) \le \log a \log b$ for $a,b\ge 10.$

\begin{corollary}
The problem of detecting the minimum
cost of a flow of value $k$
in a network
with $n$ vertices, a sink and a source, 
integral edge capacities and positive integral
edge costs polynomially bounded in $n$
admits a randomized FPP algorithm.
\end{corollary}}

By combining in turn
Theorem \ref{theo: cpfind} with 
the finding variant of Lemma \ref{lem: red}
using exactly the same reduction, we obtain
our second main result.

\begin{theorem}\label{theo: ffind}
There is a randomized PRAM algorithm returning 
 almost certainly 
a minimum-cost flow of value $k$ (iff a flow of value
$k$ exists) in a network 
with $n$ vertices, a sink and a source, 
integral edge capacities and edge costs in $[n^{O(1)}],$
in $O(k\log (kn)+\log^2 (kn))$ time using
$2^{k}(kn)^{O(1)}$ processors.
\end{theorem}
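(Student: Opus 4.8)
The plan is to invoke the reduction of Lemma~\ref{lem: red} in its finding version and then feed the resulting disjoint-paths instance into the algorithm of Theorem~\ref{theo: cpfind}. First I would apply the construction of Lemma~\ref{lem: red} to the input network $K$, producing the auxiliary network $K^*$ on $O(kn^2)$ vertices together with the two terminal sets $X$ and $Y$ of size $k$. This step runs in $O(\log k + \log n)$ time on a CREW PRAM with $O(kn^2+k^2n)$ processors, which is dominated by the later bounds. The edges of $K^*$ carry costs of two kinds: the internal edges of cost $1$ and the traversal edges of cost $c(e)kn$; since the original costs lie in $[C]$ with $C=n^{O(1)}$, every cost in $K^*$ is bounded by $Ckn=n^{O(1)}$, so $K^*$ is again a network with polynomially bounded edge costs.

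Next I would run the finding algorithm of Theorem~\ref{theo: cpfind} on $K^*$ with terminal sets $X,Y$ to obtain, almost certainly, a minimum-cost set of $k$ mutually vertex-disjoint simple paths connecting $X$ with $Y$. Writing $N=O(kn^2)$ for the vertex count of $K^*$ and $C'=Ckn$ for its cost bound, Theorem~\ref{theo: cpfind} supplies such a set in $O(k\log(C'N)+\log^2(C'N))$ time using $2^k(kC'N)^{O(1)}$ processors. Because $C=n^{O(1)}$, both $C'$ and $N$ are polynomial in $kn$, so $C'N=(kn)^{O(1)}$ and the time bound collapses to $O(k\log(kn)+\log^2(kn))$ while the processor count becomes $2^k(kn)^{O(1)}$, exactly as claimed.

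It remains to translate the returned set of paths back into a flow and to argue that minimizing the path cost indeed minimizes the flow cost. By the correspondence established in the proof of Lemma~\ref{lem: red}, a set of $k$ vertex-disjoint paths of total cost $D^*$ in $K^*$ encodes a proper integral flow of value $k$ and cost $D=\lfloor D^*/(kn)\rfloor$ in $K$: each traversal edge used on a path witnesses one unit of flow on the corresponding edge of $K$, and summing these units over the $k$ paths yields the integral flow, a step easily carried out in parallel within the stated bounds. The cost design makes the traversal contribution $D\cdot kn$ dominate the internal contribution, which is at most the total path length $<kn$; hence a path set of minimum $D^*$ realizes the minimum flow cost $D$. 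Finally, since all edge costs are positive, flow decomposition lets one remove any cycles from a minimum-cost integral flow without increasing its cost, so some minimum-cost flow is proper and is therefore captured by the reduction.

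The main obstacle is not any single hard argument but the bookkeeping of parameters through the reduction: one must check that inflating the vertex count to $O(kn^2)$ and the cost bound to $Ckn$ keeps every quantity inside the logarithm polynomial in $kn$, so that the two applications compose without degrading the $O(k\log(kn)+\log^2(kn))$ time or the $2^k(kn)^{O(1)}$ processor budget. Verifying the $\lfloor D^*/(kn)\rfloor=D$ relationship and the positivity-based reduction to proper flows are the only points where care is genuinely needed.
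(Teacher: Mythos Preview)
Your proposal is correct and follows exactly the approach the paper takes: combine the finding variant of the reduction in Lemma~\ref{lem: red} with Theorem~\ref{theo: cpfind}. In fact the paper gives only a one-sentence proof (``combining Theorem~\ref{theo: cpfind} with the finding variant of Lemma~\ref{lem: red} using exactly the same reduction''), so your parameter bookkeeping and your remarks on why the minimum-cost path set yields a minimum-cost flow and why proper flows suffice are a welcome expansion of what the paper leaves implicit.
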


\junk{
\begin{corollary}
The problem of finding a minimum-cost flow of value $k$ in a network
with a sink and a source, integral edge capacities 
and positive integral edge costs polynomially
bounded in the size of the network
admits a randomized FPP algorithm.
\end{corollary}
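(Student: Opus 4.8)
The plan is to instantiate the reduction of Lemma~\ref{lem: red} and then invoke the finding procedure of Theorem~\ref{theo: cpfind} on the resulting disjoint-paths instance, finally decoding the returned paths back into a flow. First, since we only seek a flow of value $k$, we may assume every edge capacity is at most $k$, and we build the network $K^*$ together with the terminal sets $X$ and $Y$ exactly as in the proof of Lemma~\ref{lem: red}. This network has $N=O(kn^2)$ vertices, and its edge costs are either $1$ (inside the vertex gadgets) or $c(e)kn$ (on the edges simulating an original edge $e$); as $c(e)\in[C]$ with $C=n^{O(1)}$, all edge costs in $K^*$ lie in $[C^*]$ with $C^*=O(Ckn)$.

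Next I would run the algorithm of Theorem~\ref{theo: cpfind} on $K^*$ to obtain, almost certainly, a minimum-cost set of $k$ mutually vertex-disjoint simple paths connecting $X$ with $Y$; it reports non-existence exactly when no flow of value $k$ exists, since by Lemma~\ref{lem: red} the two existence questions coincide. Because $C^*N=(kn)^{O(1)}$, the time bound $O(k\log(C^*N)+\log^2(C^*N))$ of Theorem~\ref{theo: cpfind} collapses to $O(k\log(kn)+\log^2(kn))$ and its processor bound $2^{k}(kC^*N)^{O(1)}$ collapses to $2^{k}(kn)^{O(1)}$, matching the claimed complexities; the success probability, stated relative to the size of $K^*$ and hence at least $1-1/n^{\Omega(1)}$, gives the ``almost certainly'' guarantee.

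The one ingredient beyond the decision reduction is the decoding step, which I would carry out in parallel as follows: for each original edge $e=(v,w)$, set the flow $f(e)$ to the number of returned paths that traverse one of the simulating edges $v_{out}(e,i)\to w_{in}(e,i)$. Vertex-disjointness of the paths in $K^*$ forces each capacity slot and each gadget vertex to be used at most once, so $f(e)\le capacity(e)$, while the complete bipartite structure inside each gadget makes the number of used in-slots equal the number of used out-slots at every original vertex, i.e.\ flow conservation holds. Thus $f$ is a valid integral flow of value $k$, computed by a routine parallel counting well within the stated resource bounds.

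The main point requiring care is that a \emph{minimum}-cost set of paths decodes to a \emph{minimum}-cost flow, which rests on the cost scaling $\lfloor D^*/(kn)\rfloor=D$ from Lemma~\ref{lem: red}. I would argue it by monotonicity of the floor: any set of $k$ vertex-disjoint paths of cost $D^*$ decodes to a value-$k$ flow whose cost is the high-order part $\lfloor D^*/(kn)\rfloor$, while positivity of the costs guarantees that some minimum-cost flow is acyclic and hence proper, decomposing into $k$ simple $s$--$t$ paths of total length at most $k(n-1)<kn$; this realization shows the low-order (cost-$1$) contribution of the optimal path-set stays below $kn$ and therefore never disturbs the high-order count. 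Consequently minimizing $D^*$ minimizes the flow cost first, with total path length serving only as a tie-breaker, so $\lfloor D^*_{\min}/(kn)\rfloor$ equals the minimum flow cost and the decoded flow attains it. Confirming that no such cost-$1$ overflow occurs is the only delicate step; everything else is an immediate combination of Theorem~\ref{theo: cpfind} and Lemma~\ref{lem: red}.
\qed
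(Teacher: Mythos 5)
Your proof is correct and follows exactly the route the paper takes: the corollary is an immediate consequence of Theorem~\ref{theo: ffind}, which the paper obtains by combining Theorem~\ref{theo: cpfind} with the finding variant of the reduction in Lemma~\ref{lem: red}, and your resource accounting ($C^*N=(kn)^{O(1)}$ collapsing the bounds to $O(k\log(kn)+\log^2(kn))$ time and $2^k(kn)^{O(1)}$ processors) matches the paper's. You supply more detail than the paper does --- the decoding of the disjoint paths back into a flow and the cost-scaling argument via $\lfloor D^*/(kn)\rfloor = D$ --- but these are precisely the facts the paper's one-line derivation implicitly relies on.
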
}

\begin{corollary}
The problem of finding a 
minimum-cost flow of value $O(\log n)$ in a network
with $n$ vertices, a sink and a source, and integral edge capacities 
bounded polynomially in $n$  admits an $RNC^2$ algorithm.
\end{corollary}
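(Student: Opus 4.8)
The plan is to obtain this corollary as a direct specialization of Theorem~\ref{theo: ffind}, substituting the flow value $k=O(\log n)$ and the cost bound $C=n^{O(1)}$ and then simplifying the resulting time and processor bounds. Since the capacities are polynomially bounded in $n$ by hypothesis (and we take the edge costs to be polynomially bounded as well, so that $C=n^{O(1)}$ satisfies the hypothesis of Theorem~\ref{theo: ffind}), the only work is to check that under the restriction $k=O(\log n)$ the bound $2^{k}(kn)^{O(1)}$ on the processor count becomes polynomial in $n$ and that the running time $O(k\log(kn)+\log^2(kn))$ collapses to $O(\log^2 n)$. The algorithm itself, together with its one-sided/almost-certain correctness guarantee, is already furnished by Theorem~\ref{theo: ffind}, so no new construction is required.

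First I would dispose of the processor count. Writing $k\le c\log n$ for a suitable constant $c$, we have $2^{k}\le 2^{c\log n}=n^{c}$, which is polynomial in $n$, and $(kn)^{O(1)}=(cn\log n)^{O(1)}=n^{O(1)}$ since $\log n=n^{o(1)}$. Hence $2^{k}(kn)^{O(1)}=n^{O(1)}$, i.e.\ the algorithm uses a number of processors polynomial in $n$, as demanded by the definition of $RNC^2$.

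Next I would simplify the running time. With $k=O(\log n)$ we get $kn=O(n\log n)$, so $\log(kn)=O(\log n)$; consequently the first summand satisfies $k\log(kn)=O(\log n)\cdot O(\log n)=O(\log^2 n)$, and the second satisfies $\log^2(kn)=O(\log^2 n)$. Adding these, the total running time is $O(\log^2 n)$, matching the depth requirement for $RNC^2$. Since Theorem~\ref{theo: ffind} returns a minimum-cost flow of value $k$ almost certainly, i.e.\ with failure probability at most $1/n^{\alpha}$ for some $\alpha\ge 1$, the randomized algorithm has the bounded error probability permitted in the $RNC$ framework, and the corollary follows.

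There is no substantive obstacle here; the only point requiring care is the bookkeeping in the time bound, namely verifying that \emph{both} summands of $O(k\log(kn)+\log^2(kn))$ simultaneously reduce to $O(\log^2 n)$ under the substitution $k=O(\log n)$, and that $2^{O(\log n)}$ is genuinely polynomial rather than quasi-polynomial. Once those two elementary estimates are confirmed, the membership in $RNC^2$ is immediate from Theorem~\ref{theo: ffind}.
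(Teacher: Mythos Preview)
Your proposal is correct and follows exactly the paper's (implicit) approach: the corollary is stated immediately after Theorem~\ref{theo: ffind} without a separate proof, since substituting $k=O(\log n)$ directly yields $2^{k}(kn)^{O(1)}=n^{O(1)}$ processors and $O(k\log(kn)+\log^2(kn))=O(\log^2 n)$ time. Your careful verification of both bounds is precisely the computation the paper leaves to the reader.
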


\section{Final remarks}

We have resented a new approach 
to the minimum-cost integral
flow problem. In particular, it yields an $RNC^2$ algorithm
when the flow supply is (at most)
logarithmic in the size of the network.
 
All our results can be extended to
include undirected networks by
a straightforward reduction.

\junk{The question arises
if one can obtain a deterministic FPP method 
for the problem of finding  
a minimum-cost  integral flow of value $k$
from the source $s$ to the sink $t$ ?

A straightforward approach of extending
our randomized method for deciding if
there is a proper set of mutually vertex-disjoint
walks between two sets of vertices of cardinality
$k$ could be roughly as follows. 
In parallel, for each $k$-tuple of respective neighbors
of the $k$ start vertices in $X,$
replace the set of start
vertices by the $k$-tuple
and apply our method recursively to the
resulting network. If the test is positive,
the first edges on the walks are known,
and we can iterate the method. 
The problem with this approach
is that its recursive depth is proportional
to the maximum length of a walk
in the resulting set of mutually vertex-disjoint
walks between $X$ and $Y.$
\junk{Thus, this approach could lead to a fast parallel
algorithm solely when there is a set 
of mutually vertex-disjoint
walks between the two sets, where each walk
has at most poly-logarithmic length.}

One could also consider
a straightforward divide-and-conquer approach
in order to find a set of
$k$ mutually-vertex disjoint walks fast
in parallel. Namely, one could guess a  set of
intermediate vertices on the walks in
the aforementioned set, one for each walk,
form new terminal sets including copies
of the intermediate vertices and run
our test method on the resulting network.
The problem with this approach is that
the number of terminals
\junk{, i.e., start and end vertices,}
would grow exponentially in the recursion depth.
\junk{Consequently, the applications of this test method would require
too much time and too many processors.}
Interestingly enough, at the bottom of 
the divide-and-conquer method, we would
encounter perfect bipartite matching problems
which admit RNC solutions not only
in the decision variant but also the finding
variant \cite{V93}.

It seems that more advanced methods are needed
to obtain an efficient parallelization of the
finding variant. In fact, one could consider
an edge weighted version of the problem 
of finding a set of $k$ mutually-vertex 
disjoint walks between $X$ and $Y$
in order to use the so called isolation lemma
in a manner analogous to the RNC method of finding a
perfect matching from \cite{MVV87,V93}. 
By the isolation lemma 
if the edge weights were picked uniformly at random
from sufficiently large 
polynomial range then
there would be likely a unique minimum weight set of
$k$ mutually-vertex 
disjoint walks between $X$ and $Y$.
Unfortunately, it seems that just finding
the minimum weight of a set $k$ mutually-vertex 
disjoint walks between $X$ and $Y$ for
a random edge weight assignment in the range 
specified by the isolation lemma might
require $n^{\Omega (k)}$ work. Simply, potentially
there are so many ways of splitting the minimum weight
between the sought $k$-walks.

On the other hand, it seems that just showing
that the slices of the finding variants of our parametrized
network flow problem and the corresponding
$k$ path problem admit NC or RNC algorithms is of interest.
For instance, problems hard for the sequential parametrized
class FPT must have some slices not in NC if
NP is not equal to NC \cite{CD97}. 

Assume that the integral weights 
assigned to the edges of the network 
according to the isolation lemma resulted in  a unique
minimum weight set of mutually vertex-disjoint paths
connecting $X$ with $Y,$ and that we can compute
this minimum weight in NC.
Now, we can simply test each
edge $e$ of the network for the containment in the optimal
set of paths separately in parallel by verifying
if there is a set of mutually vertex-disjoint
paths connecting $X$ with $Y$ in the 
edge weighted network with
the edge removed $e$ whose total weight 
does not exceed the minimum weight. If so, then
we can account the edge to the sought set of
edges forming the unique
minimum weight set of mutually vertex-disjoint paths
connecting $X$ with $Y,$ Otherwise, we can discard it.
To perform the aforementioned verifications in RNC 
as well as to compute the minimum weight of a
set of mutually vertex-disjoint paths
connecting $X$ with $Y$ in $RNC$, 
it is sufficient to show
that the test if there is a set of mutually vertex-disjoint
paths of total weight not exceeding a polynomially
bounded value in the edge weighted network
is in RNC. This can be done by relatively straightforward
generalization of Theorem . We generalize the length
constraint $l$ from the proof of Theorem
to an integral weight constraint and replace the path doubling
recursion in Lemma with a bit more complicated one,
since we cannot split individual edge weights, one has try
more possible partitions of total path weight.
The details are left to the reader.

In fact, if we allow for the use of $n^{O(k)}$ processors
then we can solve the finding variant 
of the parametrized vertex-disjoint path problem
and the parametrized integral network flow
problem by receptively
constructing $k$ vertex cuts 
deterministically in poly-logarithmic time.} 


\newcommand{\CIAC}{Italian Conference on Algorithms and Complexity}
\newcommand{\COCOON}{Annual International Computing Combinatorics Conference (COCOON)}
\newcommand{\COMPGEOM}{Annual ACM Symposium on Computational Geometry (SoCG)}
\newcommand{\ESA}{Annual European Symposium on Algorithms (ESA)}
\newcommand{\FOCS}{IEEE Symposium on Foundations of Computer Science (FOCS)}
\newcommand{\FSTTCS}{Foundations of Software Technology and Theoretical Computer Science (FSTTCS)}
\newcommand{\ICALP}{Annual International Colloquium on Automata, Languages and Programming (ICALP)}
\newcommand{\IPCO}{International Integer Programming and Combinatorial Optimization Conference (IPCO)}
\newcommand{\ISAAC}{International Symposium on Algorithms and Computation (ISAAC)}
\newcommand{\ISTCS}{Israel Symposium on Theory of Computing and Systems}
\newcommand{\JACM}{Journal of the ACM}
\newcommand{\LNCS}{Lecture Notes in Computer Science}
\newcommand{\MOR}{Mathematics of Operations Research}
\newcommand{\SICOMP}{SIAM Journal on Computing}
\newcommand{\SIJDM}{SIAM Journal on Discrete Mathematics}
\newcommand{\SODA}{Annual ACM-SIAM Symposium on Discrete Algorithms (SODA)}
\newcommand{\SPAA}{Annual ACM Symposium on Parallel Algorithms and Architectures (SPAA)}
\newcommand{\STACS}{Annual Symposium on Theoretical Aspects of Computer Science (STACS)}
\newcommand{\STOC}{Annual ACM Symposium on Theory of Computing (STOC)}
\newcommand{\SWAT}{Scandinavian Workshop on Algorithm Theory (SWAT)}
\newcommand{\TCS}{Theoretical Computer Science}

\newcommand{\Proc}{Proceedings of the }
\renewcommand{\Proc}{Proc. }

{\small
}
\end{document}